\documentclass[reprint,superscriptaddress,
 amsmath,amssymb,
 aps,
pra
]{revtex4-2}

\usepackage{graphicx}
\usepackage{dcolumn}
\usepackage{array}
\usepackage{physics}
\usepackage{amsfonts, amsmath}
\usepackage{dsfont}
\usepackage{amsthm}
\usepackage{bm}
\usepackage{hyperref}
\usepackage{bbold}
\usepackage{color}
\usepackage[english]{babel}
\usepackage[normalem]{ulem}
\usepackage{tabularx}

\newcommand{\T}[0]{{\mathrm{T}}}

\newcommand{\poly}{\mathrm{poly}}
\newtheorem{theorem}{Theorem}
\newtheorem{corollary}{Corollary}
\newtheorem{proposition}{Proposition}

\newtheorem{lemma}{Lemma}

\usepackage{mathtools}
\newcommand{\Per}{\operatorname{Per}}
\newcommand{\TV}{\operatorname{TV}}
\newcommand{\im}{\operatorname{im}}

\newcommand{\Nop}{\hat N}
\newcommand{\fall}[2]{(#1)_{\underline{#2}}}
\renewcommand{\ket}[1]{\lvert #1\rangle}
\renewcommand{\bra}[1]{\langle #1\rvert}

\renewcommand{\norm}[1]{\left\lVert #1\right\rVert}
\renewcommand{\abs}[1]{\left\lvert #1\right\rvert}
\providecommand{\diag}{\operatorname{diag}}

\allowdisplaybreaks
\begin{document}
\definecolor{red}{RGB}{255,0,0}

\title{Logarithmic-depth quantum simulation of boson sampling}
\author{Changhun Oh}
\email{changhun0218@gmail.com}
\affiliation{Department of Physics, Korea Advanced Institute of Science and Technology, Daejeon 34141, Korea}
\begin{abstract}
We show that boson sampling with an arbitrary $m$-mode interferometer and $n\le m$ single-photon inputs can be simulated to inverse-polynomial total-variation error by a logarithmic-depth qubit circuit with polynomially many qubits. The circuit uses Clifford+$T$ gates, arbitrary qubit connectivity, and a single final measurement, and its family is logspace uniform. The key idea is to enlarge the optical system, decompose the resulting transformation into six quadratic shears, and distribute each mode over many submodes. This redistribution permits a fixed local occupation cutoff, after which local basis changes and parallel phase gates give the qubit circuit. Consequently, our result places boson sampling within shallow quantum computation.
\end{abstract}
\maketitle

\section{Introduction}
Boson sampling is a restricted model of quantum computation whose output distribution is believed to be hard to sample classically~\cite{AA,Hamilton,GrierGBS}. Its implementation with noninteracting photons and passive linear optics has made it a leading proposal for demonstrating quantum advantage. Experimental realizations have advanced substantially, including the Jiuzhang series~\cite{Zhong,Zhong2021,Deng2023,Liu2026}, Borealis~\cite{Madsen2022}, and an atomic boson sampler~\cite{Young2024}. Alongside these experiments, extensive theoretical work has developed classical simulation algorithms and clarified how loss, partial distinguishability, noise, and circuit structure govern classical complexity~\cite{CC,CCfast,Neville,Quesada2020,Quesada2022,Bulmer2022,OhMPO2021,LiuMPO2023,OhNoisy,OhGBS,OhGraph,OhConstant,GoPartial2025}. These developments are surveyed in Refs.~\cite{HangleiterEisert,OhReview}.

Despite this progress, the quantum complexity of boson sampling is not fully understood. Standard boson sampling is widely believed to be nonuniversal, but a strict separation from universal quantum computation has not been established~\cite{AA,HangleiterEisert}. More broadly, boson sampling is built from passive linear optics, which can support universal quantum computation when supplemented with ancillary photons, adaptive measurements, and feedforward~\cite{KLM}. Boson sampling omits these intermediate controls, making it natural to ask which quantum resources suffice to simulate its output distributions and how those requirements compare with general quantum computation.

A natural way to address this question is to examine the depth of qubit circuits needed to simulate boson sampling. Depth counts sequential layers of one- and two-qubit gates, measuring how much quantum processing must remain sequential even when independent operations are performed in parallel. With comparable gate times, it also sets the duration over which quantum coherence must be maintained. Qubit encodings already provide a route to simulating bosonic systems on quantum computers~\cite{Sawaya,Tong}. Prior qubit-circuit constructions establish polynomial-size simulation of boson sampling~\cite{Moylett} and implement arbitrary linear interferometers through decompositions into two-mode elements, with depth polynomial in the numbers of modes and photons~\cite{Leone}. They therefore leave open how far the sequential depth can be reduced. Since additional qubits can enable parallel operations~\cite{MN,MNcodes,HS}, the key question is whether polynomially many qubits suffice to simulate arbitrary boson sampling in logarithmic depth. Here, the depth refers to the qubit simulator, rather than the optical interferometer.

We show that logarithmic depth suffices with a polynomial number of qubits. For an arbitrary $m$-mode interferometer, one photon in each of the first $n\le m$ modes and vacuum elsewhere, and target total-variation error $\epsilon$, our sampler has depth $O(\log(m/\epsilon))$ and width polynomial in $m$ and $1/\epsilon$. It uses Clifford+$T$ gates on arbitrarily connected qubits and a single final measurement. For every fixed inverse-polynomial accuracy, the circuit family is logspace uniform; moreover, its width can be made $O(m^\alpha)$ for any fixed $\alpha>2$. The guarantee applies to the boson-sampling distribution, including all photon collisions. It therefore gives an upper bound on the quantum depth of standard boson sampling independently of the conjectures supporting its classical sampling hardness.

The key is to reorganize the optical transformation before encoding it in qubits. We first add vacuum modes to obtain an equivalent sampling circuit on $4m$ modes that factors exactly into six quadratic shears. Each shear involves mutually commuting position or momentum quadratures. We then distribute each mode uniformly over many submodes, reducing the occupation per submode so that a fixed local cutoff permits a representation using a constant number of qubits~\cite{Sawaya}. Summing the output counts recovers the original sampling distribution. Local basis changes, coherent counting, and parallel phase gates implement the truncated shears in logarithmic depth~\cite{Gossett,MNcodes}, with this scaling preserved over a finite gate set~\cite{KMM,Ross}. We control the truncation error throughout the evolution, including the photons temporarily created by the individual shears.

With the interferometer entries supplied as binary input bits, the sampler also places the associated promise decision problems for logspace-uniform $\mathrm{NC}^1$ output predicates with a constant probability gap in $\mathrm{BQNC}^1$, as formulated in Section~\ref{sec:decision}. The classical sampling hardness that motivates boson-sampling experiments is therefore compatible with a shallow qubit implementation. This upper bound identifies a restricted quantum circuit model sufficient for the simulation, while leaving open the broader question of universality and whether still less depth can suffice.

\section{Boson sampling and main result}\label{sec:model}
In this section, we fix the input model, circuit conventions, and target distribution, and then state the depth and width bounds proved below.

Let $U\in U(m)$ describe a passive interferometer, with the mode transformation $\hat a_j^\dagger\mapsto\sum_{i=1}^m U_{ij}\hat a_i^\dagger$. The mode operators satisfy $[\hat a_i,\hat a_j^\dagger]=\delta_{ij}$. We use quadratures $\hat q_j\coloneqq(\hat a_j+\hat a_j^\dagger)/\sqrt2$ and $\hat p_j\coloneqq(\hat a_j-\hat a_j^\dagger)/(i\sqrt2)$, so $[\hat q_i,\hat p_j]=i\delta_{ij}$ and $[\hat q_i,\hat q_j]=[\hat p_i,\hat p_j]=0$. These conventions also apply to the auxiliary bosonic modes introduced below. Throughout, hats denote operators on Fock space, whereas matrices such as $U$ act on mode labels.

The input consists of one photon in each of the first $n\le m$ modes and vacuum in the remaining modes, with occupation vector $\bm t\coloneqq(1,\ldots,1,0,\ldots,0)$. For an output occupation pattern $\bm n\coloneqq(n_1,\ldots,n_m)$, with total photon number $n=\sum_{i=1}^m n_i$, the target distribution is
\begin{align}
P_U(\bm n)\coloneqq\frac{|\Per(U_{\bm n,\bm t})|^2}{\prod_{i=1}^m n_i!},\qquad \sum_{i=1}^m n_i=n.\label{eq:bsprob}
\end{align}
Here, $U_{\bm n,\bm t}$ repeats output row $i$ according to $n_i$ and selects the occupied input columns. We allow a failure symbol $\perp$ in addition to valid $n$-photon occupation patterns and extend the ideal distribution by $P_U(\perp)=0$. We use $\|P-Q\|_{\TV}=\tfrac12\sum_x|P(x)-Q(x)|$, where the sum includes $\perp$. All sampling guarantees are unconditional. We write $\|A\|$ for the operator norm.

To formulate the simulation task in the qubit circuit model, we specify how the interferometer is presented to the circuit. An input instance contains the photon number $n\le m$ in binary and two's-complement fixed-point descriptions of every real and imaginary matrix entry, with a fixed number of integer bits sufficient to represent $[-1,1]$. If each real component has $b$ bits, the decoded matrix $\widetilde U$ has an explicit description of length $N=\Theta(m^2b)$. Every input bit occupies its own computational basis wire, giving direct access to the matrix-entry bits. The matrix entries are promised to approximate those of a unitary $U$ to the precision specified in Section~\ref{sec:resources}.

Circuit depth counts layers of disjoint one- and two-qubit gates, with arbitrary connectivity. Width counts all qubits, including the matrix-entry input registers and all initialized ancillas. We take logarithms in resource bounds to be at least one. We use the fixed finite gate set Clifford+$T$ so that the depth bound includes the cost of approximating the required rotations, rather than treating arbitrary continuous gates as exact operations of unit cost~\cite{KMM,Ross}. This places the result directly in a finite gate model of quantum circuit complexity~\cite{NishimuraOzawa}. The circuit ends with a computational basis measurement; its output occupations can be computed coherently beforehand within the same depth order. No geometric locality is assumed.

With the input and circuit model fixed, we can now state the main result.

\begin{theorem}[Boson sampling in logarithmic quantum depth]\label{thm:approx}
For every $m$-mode unitary $U$ supplied through the finite-precision input model above, every photon number $n\le m$ with one photon in each of the first $n$ modes and vacuum elsewhere, and every $0<\epsilon<1$, there is a Clifford+$T$ circuit of depth $O(\log(m/\epsilon))$ and width polynomial in $m$ and $1/\epsilon$ whose output distribution $\widetilde P_U$ satisfies
\begin{align}
\|\widetilde P_U-P_U\|_{\TV}\le\epsilon.
\end{align}
For every fixed inverse-polynomial accuracy, these circuits form a logspace-uniform family of depth $O(\log m)$.
\end{theorem}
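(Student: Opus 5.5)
We follow the route sketched in the introduction, in five steps.

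\emph{Step 1: reduce to an exactly factorizable transformation on $4m$ modes.} We embed $U$ into a larger symplectic (passive) transformation $S$ on $4m$ modes. The added modes start in vacuum and are never counted. The embedding must leave the marginal output distribution on the original $m$ modes equal to $P_U$. We then factor $S$ exactly as a product of six quadratic shears, each of the form $\exp(i\hat q^{\T}A\hat q/2)$ or $\exp(i\hat p^{\T}B\hat p/2)$ with $A,B$ real symmetric. Concretely, we write the $2m\times 2m$ real orthogonal-symplectic matrix of the enlarged unitary as a product of lower and upper block-triangular symplectic matrices $\begin{pmatrix}I&0\\A&I\end{pmatrix}$ and $\begin{pmatrix}I&B\\0&I\end{pmatrix}$. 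Doubling the modes is what ensures the invertibility conditions needed for an LU-type symplectic factorization with only six factors. The entries of $A$ and $B$ are explicit rational functions of the entries of $\widetilde U$, computable in logspace. Each is bounded by $\poly(m)$, with a condition we enforce through the enlargement.

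\emph{Step 2: distribute each mode over submodes.} We replace each mode $j$ by $K$ submodes, applying a balanced discrete Fourier (beam-splitter) network $F_K$. We conjugate each shear by $F_K^{\otimes 4m}$. A quadratic form in $\hat q_j$ then becomes a form in the averages $K^{-1/2}\sum_k \hat q_{j,k}$. The shear is therefore still a commuting-quadrature shear, now with coefficients scaled by $1/K$.

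The input state $\ket{1}$ on mode $j$ becomes a single photon spread uniformly, which is a $W$-state over $K$ submodes with one submode in $\ket1$ in superposition. At the end we measure all submodes and sum the counts within each block. Since $F_K$ is passive, this coarse-graining exactly reproduces the original output counts. Preparing the $W$-states and performing the final summation (coherent counting by parallel-prefix adder trees) both take depth $O(\log K)$.

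\emph{Step 3: truncate and implement each shear in logarithmic depth.} We truncate each submode to occupation at most $d$, with $d$ a constant. Each submode is then encoded in $O(1)$ qubits in the binary Fock encoding~\cite{Sawaya}. To implement a $q$-shear, we first apply to every submode, in parallel, the local unitary $V$ that diagonalizes the truncated $\hat q$ in the $(d+1)$-dimensional space; this is a constant-size gate. We then apply the diagonal phase $\exp(i\sum A_{ij}\,x_ix_j/(2K))$, where $x$ are the resulting eigenvalues, and finally undo $V$. The diagonal phase is quadratic, so it decomposes into pairwise phase gates. Grouping these into matchings, or computing the quadratic form coherently with carry-save adder trees~\cite{Gossett,MNcodes} and applying phase kickback, gives depth $O(\log(mK))$. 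Finite gate set approximation follows from~\cite{KMM,Ross}: each rotation to precision $\delta$ costs depth $O(\log(1/\delta))$. Taking $\delta=\epsilon/\poly$ over $\poly(m,1/\epsilon)$ gates keeps the total depth at $O(\log(m/\epsilon))$. Input precision $b=O(\log(m/\epsilon))$ bits suffices, because the distribution is Lipschitz in $U$ with a polynomial constant. The bits of $A$ and $B$ can be computed in $\mathrm{NC}^1$, which fits within the depth budget.

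\emph{Step 4 (the main obstacle): control the truncation error.} Individual shears are active transformations, so they create photons even though the full product $S$ is passive. We must bound the probability mass of the intermediate states on submode occupations above $d$. We would bound this by $\sum_k \Pr[n_{j,k}>d]$, with the probability evaluated at each of the six intermediate states. Each such intermediate state is a Gaussian operation applied to at most $n$ photons spread over $K$ submodes. The per-submode reduced state therefore has mean energy $O(\poly(m)/K)$ from the shear squeezing plus $O(n/K)$ from the photons. Gaussian moment bounds then give $\Pr[n_{j,k}>d]\lesssim (\poly(m)/K)^{d+1}$.

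Summing over the $4mK$ submodes gives an error of order $mK(\poly(m)/K)^{d+1}$, which is at most $\epsilon^2$ once $K=\poly(m)/\epsilon^{O(1/d)}$. Choosing $d$ large but constant makes the width $O(m^\alpha)$ for any $\alpha>2$. We then convert truncated-versus-ideal step errors into a total error by a telescoping argument, comparing the truncated evolution with the ideal one step by step via the gentle projection bound; this gives trace-distance error $O(\epsilon)$. Outputs that violate the photon number $n$ are mapped to $\perp$. Logspace uniformity holds because every gate in the construction is given by an explicit index arithmetic formula.
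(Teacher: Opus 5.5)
Your overall architecture matches the paper's: dilate, factor into shears, split into submodes, truncate, then a local basis change plus parallel diagonal phases. However, two load-bearing steps are not established.

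\textbf{Step 4: the dynamical error.} Bounding $\sum_k\Pr[n_{j,k}>d]$ at the six inter-shear states and telescoping with the gentle projection bound only controls $\norm{\hat P^\perp\ket\Psi}$ at those points. It says nothing about the per-shear error $\norm{(e^{-i\hat{\bar H}}-e^{-i\hat H})\hat P\ket\Psi}$, because truncation modifies the generator at every instant of each shear. This is why the paper stresses that small leakage at the output alone is not sufficient.

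You need a Duhamel comparison along the continuous trajectory. Comparing $\ket{\Psi_{\rm tr}(t)}$ with $\hat P\ket{\Psi_{\rm id}(t)}$ produces the source $\hat P\hat H_K(t)\hat P^\perp\ket{\Psi_{\rm id}(t)}$, integrated over $t\in[0,6]$, as in Eq.~\eqref{eq:duhamel}. This clean form uses that the truncated generator is exactly the compression $\hat P\hat H\hat P$. The paper gets this from zero-diagonal coefficient matrices, which your factorization does not guarantee.

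More importantly, you must control $\hat H$ on the leaked component. On the cutoff space $\norm{\hat{\bar H}}$ grows linearly in $K$. A norm bound times the leakage amplitude $\poly(m)K^{-d/2}$ therefore gives $\poly(m)K^{1-d/2}$, which does not decay for $d\le2$. At your choice of total leakage probability $\epsilon^2$, it is of order $\poly(m)K\epsilon$ rather than $O(\epsilon)$.

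The paper's key estimate is the $K$-independent relative bound of Lemma~\ref{lem:Hrelative}, $\norm{\hat H(A)\ket\xi}\le\sqrt3\norm{(\Nop+4m+2)\ket\xi}$. It turns the source into the photon-number-weighted leakage $\norm{(\Nop+4m+2)\hat P^\perp\ket{\Psi_{\rm id}(t)}}$. That quantity is then bounded by the factorial-moment dilution identity at order $d+1$, weighted by $(\Nop+c)^2$. It also needs a uniform bound on the intermediate quadrature transformations at \emph{all} times inside each shear, not only at the six endpoints.

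\textbf{Step 1: the six-shear factorization.} This is asserted rather than constructed. An LU-type shear factorization of a passive symplectic matrix with blocks $X=\mathrm{Re}\,U$ and $Y=\mathrm{Im}\,U$ needs the off-diagonal block to be invertible. It produces coefficients such as $Y^{-1}(I-X)$, which are unbounded for general $U$. You do not say which enlargement enforces invertibility and a $\poly(m)$ bound. Since matrix inversion is not known to lie in $\mathrm{NC}^1$, computing such coefficients coherently within logarithmic depth is also unsupported.

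The missing idea is a dilation that needs no arithmetic on $U$. The balanced encoder $C$ replaces $U$ by the real orthogonal $O_U$. Then $R_U=\bigl(\begin{smallmatrix}0&O_U^{\T}\\ O_U&0\end{smallmatrix}\bigr)$ is a real symmetric involution. Hence the Fock unitary with mode matrix $e^{-i\pi R_U/2}=-iR_U$ factors exactly as $\hat V_U=\hat Q_q(R_U)\hat Q_p(R_U)\hat Q_q(R_U)$. Its coefficients are signed copies of the input entries, with norm at most one and zero diagonal. The intermediate quadrature amplification is at most two (Lemma~\ref{lem:amplification}).

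\textbf{Two smaller points.} First, your precision argument compares ideal distributions. But the circuit built from a non-unitary $\widetilde U$ is not a boson sampler for any $U$. What you need is the operator-norm perturbation $\norm{\hat{\bar H}(A)-\hat{\bar H}(\widetilde A)}\le(4m)^2dK\eta_A$ of the truncated Hamiltonians. Second, grouping the pairwise phases into matchings gives depth $\Theta(mK)$, because each submode appears in $\Theta(mK)$ terms. Only the fanout or coherent-counting route reaches logarithmic depth.
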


The width can moreover be made arbitrarily close to quadratic in the number of modes.

\begin{corollary}[Width arbitrarily close to quadratic]\label{cor:width}
For any fixed inverse-polynomial accuracy and any fixed exponent $\alpha>2$, the sampler in Theorem~\ref{thm:approx} can be implemented by a logspace-uniform qubit circuit of depth $O(\log m)$ and width $O(m^\alpha)$.
\end{corollary}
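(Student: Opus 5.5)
The corollary follows by rerunning the construction behind Theorem~\ref{thm:approx} and tracking where the qubits go. For a fixed inverse-polynomial accuracy, say $\epsilon=m^{-c}$, every width contribution will have the form $(\#\text{modes})\times(\#\text{submodes per mode})\times(\text{qubits per submode})$, plus ancillas for the parallel phase gates and coherent counting. The enlarged optical system has $4m$ modes and the local cutoff is fixed, so each submode costs $O(1)$ qubits. The width is therefore set by two things: the number $K$ of submodes per mode, and the ancilla overhead of the logarithmic-depth subroutines.

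The plan has three steps.

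\emph{Step 1: bound the number of submodes.} I would first revisit the truncation-error analysis and choose $K$ as small as the error budget permits. The concern is the photons temporarily created by the individual shears. A quadratic shear $e^{i\hat q^{\T}A\hat q/2}$ acting on a state with $O(n)$ photons creates $O(n\|A\|)$ excess quanta. Spreading these over $K$ submodes gives per-submode occupation $O(n/K)$. The probability that a given submode exceeds a cutoff $L$ then scales like $(n/K)^{L+1}$. A union bound over $4mK$ submodes and six shears gives error roughly $mK\,(m/K)^{L+1}$.

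Taking $K=m^{1+\delta}$ and the constant $L$ large in terms of $\delta$ and $c$ makes this at most $m^{-c}$. The total number of submodes is then $4mK=O(m^{2+\delta})$.

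\emph{Step 2: bound the ancilla overhead.} I would check that the shear implementations can be carried out with $O(m^{2+\delta'})$ extra qubits. These implementations consist of local basis changes, coherent counting (summing submode occupations into $O(\log m)$-bit registers), and parallel phase gates that implement $e^{i\theta\hat q_i\hat q_j}$ across submode pairs.

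The danger is the phase gates. A naive pairwise scheme over all $(mK)^2$ submode pairs would blow the width up to about $m^{4}$. To avoid this, I would apply the phase gates at the level of mode-aggregated binary registers: compute $Q_j=\sum_k \hat q_{j,k}$ coherently, then apply the phases $\theta_{ij}Q_iQ_j$ using $m^2\cdot\poly(\log m)$ qubits. Fan-out copies of each $Q_j$ (there are $m$ copies, each $O(\log m)$ bits) keep the depth logarithmic. Gate-synthesis ancillas for the Clifford+$T$ approximations of the rotations add only $\poly\log$ factors.

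\emph{Step 3: combine and check uniformity.} Adding up, the width is $O(m^{2+\delta}\poly\log m)\subseteq O(m^\alpha)$ once $\delta<\alpha-2$. Since $\alpha$ and $c$ are fixed constants, $L$, $\delta$, and the depth constant stay constant, so the depth remains $O(\log m)$. Logspace uniformity is inherited from Theorem~\ref{thm:approx}, because only the constants $K$ and $L$ change and both are computable in logspace.

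The main obstacle is Step 1. The theorem's proof may use a cruder tail bound, such as Markov's inequality at cutoff $1$, which forces $K\sim m^2/\epsilon$. I would replace it with a higher-moment bound on submode occupation. To get such a bound, I would use the fact that uniform distribution across submodes makes each submode's reduced state close to a low-mean state, so that $\Pr[\hat n_{\text{sub}}>L]\le \langle \binom{\hat n_{\text{sub}}}{L+1}\rangle = O((n/K)^{L+1})$. This factorial-moment estimate must be established uniformly across intermediate states between the shears.
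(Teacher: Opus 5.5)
Your proposal is essentially the paper's proof. With $\epsilon=m^{-c}$, you fix a larger constant cutoff $s$ (your $L$), so that $K=O(m^{1+\delta})$ with $\delta=(3+2c)/s<\alpha-2$ and the submode registers number $O(m^{2+\delta})$; the histogram-counting and fanout implementation of the diagonal phases adds only $O(m^2\log^3 m)$ qubits, and all depth constants depend only on the fixed $s$.

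The Step~1 obstacle you anticipate is already handled by Theorem~\ref{thm:truncationerror}. That theorem is stated for every fixed $s$ and uses exactly the factorial-moment dilution identity and trajectory-uniform moment bound you describe. One correction to your heuristic: the controlling quantity is not a union bound on leakage probabilities. It is the state-norm bound $C_s(n+4m+2s+8)^{(s+3)/2}K^{-s/2}$, obtained by integrating the $(\Nop+4m+2)$-weighted leakage along the continuous evolution via Duhamel; this is roughly $m$ times the square root of a leakage probability. Consequently, with $K\asymp m^{1+\delta}$ one needs $s\delta\ge 3+2c$, rather than the $L\delta\ge 2+c$ your heuristic suggests.
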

The construction parameters and the width--accuracy tradeoff are derived in Section~\ref{sec:error}.

To place the construction within the standard uniform-circuit framework, we treat the entries of $U$ as input bits rather than compiling a separate circuit for each interferometer. For fixed size and accuracy parameters, a single circuit layout works for every allowed setting of these bits. The six shear coefficients are fixed constants or signed copies of the real and imaginary parts of $U$, and CNOT trees distribute coefficient bits wherever they are needed in logarithmic depth. Thus, the circuit performs no instance-dependent diagonalization or numerical decomposition of $U$: the local basis change in Step 3 depends only on the fixed cutoff $s$, while $U$ enters through input-controlled rotations whose angles are determined by bit positions and circuit indices.

Beyond being independent of the interferometer entries, this circuit layout can be generated in logarithmic space. Given the size parameters and the prescribed inverse-polynomial accuracy, a classical logspace machine generates the wiring and the Clifford+$T$ descriptions of the fixed local rotations, while the numerical entries of $U$ are supplied through the input wires. Appendix~\ref{sec:uniformity} gives the generator explicitly. For fixed inverse-polynomial accuracy, the stated precision gives $b=O(\log m)$. Consequently, the input length $N=\Theta(m^2b)$ satisfies $\log N=\Theta(\log m)$.

\subsection{Mapping from optical modes to qubits and proof structure}
We now outline the construction underlying the theorem and explain the role of each step. Figure~\ref{fig:construction} illustrates the mapping, including the input and readout.

\begin{enumerate}
\item \emph{Enlargement and shear decomposition.} Add $3m$ vacuum modes and construct an equivalent passive interferometer on $4m$ modes whose photon counts, summed over each output pair, reproduce the target distribution. It consists of a fixed passive input encoder followed by a passive transformation with mode matrix $R_U$, which is real and symmetric and satisfies $R_U^2=I$. Both transformations admit exact decompositions into three shears, up to phases that do not affect photon counting, giving six shears in total.
\item \emph{Redistribution over submodes.} Append $K-1$ vacuum submodes per mode and apply uniform splitters before the six shears. Transform each shear with the same splitters. The resulting $4mK$-mode circuit has the original photon-counting distribution after summing the submode counts within each output pair.
\item \emph{Qubit implementation.} Truncate each local quadrature to occupations $0,\ldots,s$ and encode each submode in $\lceil\log_2(s+1)\rceil$ qubits. Prepare the distributed Fock input, apply the six finite shear circuits, and measure the submode occupations. For each original output mode, sum the occupations of all $2K$ submodes belonging to its two output ports. Report a failure symbol $\perp$ if the total differs from $n$; this outcome is included in the unconditional output distribution.
\end{enumerate}
Here, $s$ is the local occupation cutoff and $K$ is the number of submodes per optical mode. Their values, together with the gate precision, are chosen in the subsequent error analysis to achieve the desired error $\epsilon$ and resource bounds.

\begin{figure}[tb]
\centering
\includegraphics[width=0.48\textwidth]{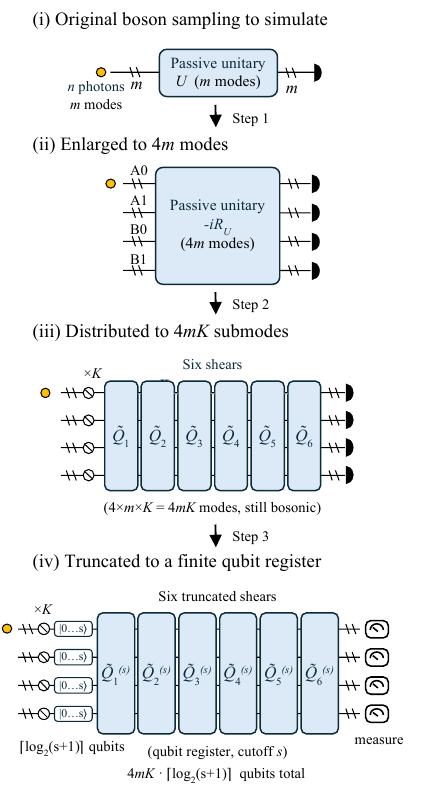}
\caption{Construction of the boson-sampling simulator. (i) The original $m$-mode problem with $n$ single-photon inputs. (ii) Enlargement to $4m$ modes, with the fixed input encoder implicit before $-iR_U$. (iii) Distribution over $K$ submodes per mode and the six corresponding shears. (iv) Local truncation to occupations $0,\ldots,s$ and qubit implementation. Summing the measured occupations over the submodes of each output pair gives the simulated sample. The displayed qubit count includes only the submode registers; input and auxiliary registers are omitted.}
\label{fig:construction}
\end{figure}

Sections~\ref{sec:gaussian}--\ref{sec:submodes} establish these three steps; Section~\ref{sec:error} bounds their approximation error and proves the resource estimates.

\section{Step 1: Enlargement and shear decomposition}\label{sec:gaussian}
We first enlarge the original $m$-mode interferometer to a passive optical circuit on $4m$ modes with vacuum auxiliary inputs. Photon counting followed by summing fixed pairs of output occupations reproduces the original distribution. This circuit consists of a balanced input encoder followed by a passive transformation with a real symmetric mode matrix squaring to the identity; each is decomposed into three quadratic shears.

Label the four groups of $m$ modes by $A0,A1,B0,B1$, in this order, and index the modes within each group by $j=1,\ldots,m$. The original input occupies $A0$, while $A1,B0,B1$ initially contain vacuum. The output will be read from the pairs $(B0_i,B1_i)$ by adding the two photon counts for each $i$.

\begin{proposition}[Exact six-shear dilation]\label{prop:six}
For every $U\in U(m)$, there is a circuit of six quadratic shears on the $4m$ modes above whose product is passive. For the specified input, $A0,A1$ return to vacuum, while summing the output counts in each pair $(B0_i,B1_i)$ exactly reproduces the output distribution of $U$. Each shear coefficient matrix is real symmetric, zero-diagonal, and has norm at most one; the input ports and output pairs are fixed independently of $U$.
\end{proposition}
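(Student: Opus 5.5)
The plan is to build the $4m$-mode circuit in two passive stages and check that each factors into three shears. First comes a fixed balanced encoder on each pair $(A0_j,A1_j)$. Second comes a transformation $-iR_U$ with $R_U$ real symmetric, $R_U^2=I$ and zero diagonal. For each stage I will verify the shear factorization at the level of the symplectic action on $(\hat{\bm q},\hat{\bm p})$.

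\emph{Conventions.} A passive transformation with mode matrix $V=X_V+iY_V$ acts as
\[
\begin{pmatrix}\hat{\bm q}\\ \hat{\bm p}\end{pmatrix}\mapsto\begin{pmatrix}X_V & -Y_V\\ Y_V & X_V\end{pmatrix}\begin{pmatrix}\hat{\bm q}\\ \hat{\bm p}\end{pmatrix}.
\]
I will use two kinds of quadratic shears. A momentum shear $S_q(C)=\exp(-\tfrac i2\hat{\bm q}^\T C\hat{\bm q})$ acts as $\hat{\bm p}\mapsto\hat{\bm p}-C\hat{\bm q}$ (sign fixed by convention). A position shear $S_p(C)$ acts as $\hat{\bm q}\mapsto\hat{\bm q}+C\hat{\bm p}$. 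Each has a real symmetric coefficient matrix $C$.

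\emph{The $R_U$ stage.} Write $U=X+iY$ and let
\[
M=\begin{pmatrix}X&-Y\\ Y&X\end{pmatrix}
\]
be its realification. Unitarity of $U$ makes $M$ real orthogonal. Set $R_U=\begin{pmatrix}0&M\\ M^\T&0\end{pmatrix}$, with blocks ordered $(A0,A1\mid B0,B1)$. Then $R_U$ is real, symmetric, zero-diagonal, satisfies $R_U^2=I$, and has $\|R_U\|=1$. The passive map $-iR_U$ has symplectic matrix $\begin{pmatrix}0&R_U\\ -R_U&0\end{pmatrix}$. Using only $R_U^2=I$, one multiplies out
\[
\begin{pmatrix}I&R_U\\0&I\end{pmatrix}\begin{pmatrix}I&0\\-R_U&I\end{pmatrix}\begin{pmatrix}I&R_U\\0&I\end{pmatrix}=\begin{pmatrix}0&R_U\\-R_U&0\end{pmatrix}.
\]
This is three shears, each with coefficient $\pm R_U$.

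\emph{The encoder stage.} On each pair $(A0_j,A1_j)$ take the mode matrix $E=\tfrac1{\sqrt2}\begin{pmatrix}1&-i\\-i&1\end{pmatrix}=\exp(-i\tfrac\pi4\sigma_x)$, and the identity on $B0,B1$. Its symplectic matrix is a rotation by $\theta=\pi/4$ generated by $\Sigma=\bigoplus_j\sigma_x$. Since $\Sigma^2=I$ on the $A$ modes, the scalar identity
\[
\begin{pmatrix}c&s\\-s&c\end{pmatrix}=\begin{pmatrix}1&t\\0&1\end{pmatrix}\begin{pmatrix}1&0\\-s&1\end{pmatrix}\begin{pmatrix}1&t\\0&1\end{pmatrix},\qquad t=\tan(\theta/2),
\]
lifts verbatim with $t\to t\Sigma$ and $s\to s\Sigma$. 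This gives three shears with coefficients $t\Sigma$ and $-s\Sigma$, where $t=\tan(\pi/8)<1$ and $s=\sin(\pi/4)<1$. Both are real symmetric, zero-diagonal, and of norm at most one. The product of the six shears is the passive map $-iR_U E$, as required.

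\emph{Correctness of the output.} I track single-photon creation operators. The encoder sends $\hat a^\dagger_{A0_j}\mapsto(\hat a^\dagger_{A0_j}-i\hat a^\dagger_{A1_j})/\sqrt2$, i.e.\ a coefficient vector $(v,-iv)/\sqrt2$ on $(A0,A1)$ with $v=e_j$. The map $-iR_U$ sends it entirely into $(B0,B1)$, because $R_U$ is block off-diagonal; hence $A0,A1$ end in vacuum. The resulting $B$ vector is
\[
-iM(v,-iv)/\sqrt2=-i(Uv,\,-iUv)/\sqrt2,
\]
since $M$ acts complex-linearly and $(X+iY)v=Uv$, $(Y-iX)v=-iUv$.

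So the $n$-photon output state is, up to the global phase $(-i)^n$, the state $U$ applied to the input on the virtual modes $\hat c_i=(\hat b_{0,i}+i\hat b_{1,i})/\sqrt2$. (The creation operators are $(\hat b^\dagger_{0,i}-i\hat b^\dagger_{1,i})/\sqrt2$.) The orthogonal partner modes $\hat d_i=(\hat b_{0,i}-i\hat b_{1,i})/\sqrt2$ stay in vacuum. On each pair, $\hat n_{B0_i}+\hat n_{B1_i}=\hat c_i^\dagger\hat c_i+\hat d_i^\dagger\hat d_i$, because a $2\times2$ unitary change of basis preserves the pair's total number. Therefore the summed counts have exactly the joint distribution of $\hat c^\dagger_i\hat c_i$, which is $P_U$ in Eq.~\eqref{eq:bsprob}. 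The input ports $A0$ and the output pairs $(B0_i,B1_i)$ are fixed independently of $U$.

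\emph{Main obstacle.} The delicate step is the encoder. The second stage $-iR_U$ is essentially real, so the complex structure of $U$ can only enter through a complex encoder. Yet the encoder must still factor into zero-diagonal, norm-$\le1$ shears. A single-mode phase rotation would need diagonal shear coefficients, which are not allowed. The choice of a rotation generated by the off-diagonal involution $\Sigma$ resolves this, because the scalar three-shear identity then applies unchanged. The remaining care is in fixing the sign conventions of the shears consistently with the passive-map convention $\hat a_j^\dagger\mapsto\sum_i V_{ij}\hat a_i^\dagger$.
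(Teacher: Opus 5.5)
Your route is the paper's own. You use the same balanced encoder $E=C$, factored by the three-shear rotation identity at angle $\pi/4$ with the off-diagonal involution $\Sigma$. You also use a real symmetric involution $R_U$ built from the realification $M=O_U$, factored into three shears at angle $\pi/2$. Your $p$--$q$--$p$ ordering in place of the paper's $q$--$p$--$q$ is equally valid.

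There is, however, a concrete error in the orientation of $R_U$, and it makes your circuit sample the wrong distribution. With blocks ordered $(A\mid B)$ and your own convention $\hat a_j^\dagger\mapsto\sum_i V_{ij}\hat a_i^\dagger$, coefficient vectors transform as $x\mapsto Vx$. A vector supported on $A$ is therefore sent to the $B$-component $(R_U)_{BA}x_A$. For your $R_U=\begin{pmatrix}0&M\\ M^\T&0\end{pmatrix}$ this is $M^\T x_A$, not $Mx_A$. Since $M^\T$ is the realification of $U^\dagger=X^\T-iY^\T$, the encoded photon actually emerges as $-i(U^\dagger v,-iU^\dagger v)/\sqrt2$. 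The pair sums then follow $P_{U^\dagger}$ rather than $P_U$. These differ in general: take the cyclic shift on three modes with $Ue_1=e_2$ and $n=1$. Then $P_U$ places the photon in output $2$, whereas your circuit places it in output $3$.

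The repair is to put $M$ in the lower-left block, $R_U=\begin{pmatrix}0&M^\T\\ M&0\end{pmatrix}$, which is the paper's Eq.~\eqref{eq:reflection}. Swapping the blocks leaves symmetry, zero diagonal, $R_U^2=I$, $\|R_U\|=1$ and your three-shear identity untouched. With that change, the rest of your argument goes through.

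Two smaller points.

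\begin{itemize}
\item You verify the factorizations only through the Heisenberg action on $(\hat{\bm q},\hat{\bm p})$, so you should say why this pins down the product of shear unitaries. By irreducibility of the Fock representation, two unitaries inducing the same transformation of all quadratures differ by a scalar. A global phase does not affect photon counts. The paper instead proves the unitary identity exactly, fixing the sign by continuity from $\theta=0$.
\item In your convention the middle encoder shear is $S_q(s\Sigma)$; $-s\Sigma$ is the corresponding block of its symplectic matrix, not its coefficient.
\end{itemize}

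Neither point affects the zero-diagonal or norm bounds.
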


The construction has two parts. We first build an equivalent passive sampling circuit using a fixed encoder followed by a transformation with real symmetric mode matrix $R_U$ satisfying $R_U^2=I$. We then express the encoder and the transformation defined by $R_U$, up to a phase that does not affect photon counting, as two groups of three shears. The zero diagonal of the shear coefficients will make the local Hamiltonian compression exact, while bounded intermediate squeezing will control the approximation error.

\subsection{An equivalent interferometer on four times as many modes}
We reach the $4m$-mode circuit through two successive enlargements. First combine the original input group $A0$ with the vacuum group $A1$. This gives a $2m$-mode representation in which the action of the complex matrix $U$ is encoded by a real orthogonal matrix. We then include the vacuum groups $B0,B1$ to turn that real orthogonal matrix into a symmetric involution suitable for the shear decomposition.

For the first enlargement, write $U=X+iY$, with $X,Y$ real, and define the real orthogonal matrix
\begin{align}
O_U\coloneqq\begin{pmatrix}X&-Y\\Y&X\end{pmatrix}.
\label{eq:realembedding}
\end{align}
To connect this real representation to the physical input arrangement, introduce the balanced encoder on $A0,A1$,
\begin{align}
C\coloneqq\frac1{\sqrt2}\begin{pmatrix}I_m&-iI_m\\-iI_m&I_m\end{pmatrix}.
\label{eq:balancedencoder}
\end{align}
Since $A1$ begins in vacuum, $U\oplus U^*$, where $U^*$ denotes entrywise complex conjugation, reproduces the action of the original interferometer on $A0$. Applying $C$ afterward preserves the total photon number within each output pair. Thus, reporting $n_i=N_{A0,i}+N_{A1,i}$ gives exactly $P_U$. We can now rewrite this circuit using the identity
\begin{align}
C(U\oplus U^*)=O_UC,
\label{eq:intertwine}
\end{align}
which gives an equivalent realization: apply the fixed encoder $C$ to the input, followed by the real orthogonal transformation $O_U$. Thus, the first enlargement realizes the original sampling problem using the real orthogonal transformation $O_U$; the second will give the symmetric involution needed for the shear decomposition.

For the second enlargement, append the vacuum groups $B0,B1$ and extend the input encoder to $C\oplus I_{2m}$. With the mode order $A0,A1,B0,B1$, define

\begin{align}
R_U\coloneqq\begin{pmatrix}0&O_U^{\T}\\O_U&0\end{pmatrix}.
\label{eq:reflection}
\end{align}
In the first enlargement, $O_U$ is the unitary matrix describing a passive interferometer on the modes; here, $R_U$ plays the same role on the enlarged system. The block form of $R_U$ applies the same transformation $O_U$ to the occupied $A$ groups while transferring the output to $B0,B1$. The $A$ groups return to vacuum, and summing the photon counts in each pair $(B0_i,B1_i)$ again gives $P_U$. The complementary block $O_U^{\T}$ makes $R_U$ real orthogonal and symmetric. The involution property $R_U^2=I_{4m}$ gives the fixed, bounded shear strengths used below, while the zero diagonal will be needed later in the local-truncation argument.

\subsection{Shear factorization}
Throughout this subsection, $\hat q,\hat p$ collect all $4m$ modes in the order $(A0,A1,B0,B1)$, with indices $1,\ldots,m$ within each group; the single indices $j,k=1,\ldots,4m$ enumerate the modes in this same order.
We now express the encoder and the interferometer defined by $R_U$ as six Gaussian shears on the same $4m$ modes, preserving their photon-counting distribution. To connect the mode transformation to the operator being decomposed, note that the real symmetry of $R_U$ makes the following number-preserving quadratic generator Hermitian. Define the passive Fock-space unitary
\begin{align}
\hat V_U\coloneqq\exp\!\left[-\frac{i\pi}{2}\sum_{j,k=1}^{4m}\hat a_j^\dagger(R_U)_{jk}\hat a_k\right].
\label{eq:fockreflection}
\end{align}
For the state transformation $\ket\psi\mapsto\hat V_U\ket\psi$, the corresponding mode transformation is
\begin{align}
\hat V_U \hat a_j^\dagger\hat V_U^\dagger
=\sum_{k=1}^{4m}\bigl[e^{-i\pi R_U/2}\bigr]_{kj}\hat a_k^\dagger.
\label{eq:reflectionbch}
\end{align}
The $4m\times 4m$ matrix exponential thus appears as the coefficients mixing the creation operators. Using $R_U^2=I_{4m}$ then evaluates it as
\begin{align}
e^{-i\pi R_U/2}=-iR_U.
\label{eq:reflectionevolution}
\end{align}
Consequently, $\hat V_U \hat a_j^\dagger\hat V_U^\dagger=-i\sum_{k=1}^{4m}(R_U)_{kj}\hat a_k^\dagger$ for $1\le j\le 4m$. This is the mode transformation generated by $R_U$ from the previous subsection with a common factor $-i$ per photon. On an $n$-photon input the factor is the global phase $(-i)^n$; hence, the distribution of photon counts summed over each output pair is unchanged.

We now decompose this Fock space unitary. Since $R_U$ is real symmetric and has trace zero, Eq.~\eqref{eq:fockreflection} can be written in quadratures as
\begin{align}
\hat V_U=\exp\!\left[-\frac{i\pi}{4}\left(\hat q^{\T}R_U\hat q+\hat p^{\T}R_U\hat p\right)\right].
\label{eq:reflectionquadratures}
\end{align}
Diagonalizing $R_U$ reduces this evolution to independent rotations in the position--momentum planes of its eigenmodes. Since the involution property $R_U^2=I_{4m}$ restricts the eigenvalues to $\pm1$, applying the standard factorization of these planar rotations into three shears~\cite{Paeth1986} and returning to the original mode basis gives the desired decomposition,
\begin{align}
\hat V_U=\hat Q_q(R_U)\hat Q_p(R_U)\hat Q_q(R_U),
\label{eq:centralthreeshears}
\end{align}
where, for real symmetric matrices $A,B$,
\begin{align}
\hat Q_q(A)\coloneqq e^{-\frac i2\hat q^{\T}A\hat q},\qquad \hat Q_p(B)\coloneqq e^{-\frac i2\hat p^{\T}B\hat p}.
\label{eq:shearoperators}
\end{align}
The diagonalization is used only to derive this identity and is not implemented by the circuit. Appendix~\ref{app:shears} gives the rotation identity, its application to the eigenmodes of $R_U$, and the verification of the overall phase.

The balanced input encoder similarly consists of rotations by $\pm\pi/4$ on its two eigenmodes per pair. Its three shear coefficients, in the fixed mode order, are
\begin{align}
B_{\mathrm{enc}}\coloneqq\left[\frac1{\sqrt2}\begin{pmatrix}0&I_m\\I_m&0\end{pmatrix}\right]\oplus0_{2m},\quad
A_{\mathrm{enc}}\coloneqq(2-\sqrt2)B_{\mathrm{enc}}.
\label{eq:encodershears}
\end{align}
The product $\hat Q_q(A_{\mathrm{enc}})\hat Q_p(B_{\mathrm{enc}})\hat Q_q(A_{\mathrm{enc}})$ implements the encoder $C\oplus I_{2m}$. These coefficients follow from the same rotation identity at angles $\pm\pi/4$, as derived in Appendix~\ref{app:shears}.

\begin{proof}[Proof of Proposition~\ref{prop:six}]
Apply the three encoder shears from Eq.~\eqref{eq:encodershears}, followed by the three shears implementing $\hat V_U$ in Eq.~\eqref{eq:centralthreeshears}. All six coefficient matrices have zero diagonal and norm at most one. The resulting mode transformation is that of the enlarged sampling circuit, with a common factor $-i$ per photon. For an $n$-photon input this is the global phase $(-i)^n$; hence, the pair-sum readout gives exactly $P_U$.
\end{proof}
Step 1 specifies the input and six-shear circuit on $4m$ modes. We next distribute these modes over more optical modes while preserving the sampling distribution.

\section{Step 2: Redistribution over submodes}\label{sec:redistribution}
We now enlarge the six-shear circuit from $4m$ modes to $4mK$ submodes in order to reduce the occupation carried by any one local mode. Each original mode is spread coherently over $K$ submodes, suppressing large occupation of an individual submode. Step 3 will use this dilution to truncate every submode at a fixed local cutoff while controlling the resulting error. The enlargement is exact: as in the first enlargement, summing the counts within each output block preserves the sampling distribution.

Start with the $4m$-mode input from Step 1, with photons in the occupied $A0$ modes and vacuum in the other modes. Since the input encoder is already included in the six shears, no shear has yet been applied. Attach $K-1$ vacuum submodes to each of the $4m$ modes. Write the original mode in each block as $\hat a_{i1}$ and the added modes as $\hat a_{i2},\ldots,\hat a_{iK}$, where $1\le i\le4m$. Let $\hat{\mathcal D}_K$ be the product of passive $K$-port splitters, one for each original mode, chosen so that
\begin{align}
\hat a_{i1}^\dagger\mapsto
\frac1{\sqrt K}\sum_{\mu=1}^K \hat a_{i\mu}^\dagger.
\end{align}
Thus, the occupation of mode $i$ is spread uniformly across its $K$ submodes. The operator $\hat{\mathcal D}_K$ is unitary on the full $4mK$-mode Fock space, and the distributed input is $\hat{\mathcal D}_K(\ket\psi\otimes\ket0_{\rm aux})$, where $\ket\psi$ is the original $4m$-mode input.

Applying $\hat{\mathcal D}_K$ only to the input would dilute the state but would not reproduce the original circuit: the subsequent shears must act on the collective modes rather than on a single representative submode. We therefore make the same change of modes in every shear. Let $\hat Q_1,\ldots,\hat Q_6$ denote the original shears in chronological order, including the three input-encoder shears, and define
\begin{align}
\hat{\widetilde Q}_\ell\coloneqq\hat{\mathcal D}_K(\hat Q_\ell\otimes \hat I_{\rm aux})\hat{\mathcal D}_K^\dagger,
\qquad 1\le\ell\le6.
\label{eq:dilutedshear}
\end{align}
Each $\hat{\widetilde Q}_\ell$ is again a shear: replace each original quadrature by the sum of its $K$ submode quadratures divided by $\sqrt K$. For example,
\begin{align}
\hat{\widetilde Q}_q(A)\coloneqq\exp\!\left[-\frac{i}{2K}\sum_{i,j=1}^{4m}A_{ij}
\sum_{\mu,\nu=1}^K\hat q_{i\mu}\hat q_{j\nu}\right].
\label{eq:dilutedq}
\end{align}
Its coupling between submodes $(i,\mu)$ and $(j,\nu)$ is $A_{ij}/K$. It applies the original shear to the collective modes and leaves the orthogonal mode combinations unchanged. The same replacement gives the enlarged $p$-shear.

The enlarged circuit first applies $\hat{\mathcal D}_K$ to the input with auxiliary vacuum, and then applies $\hat{\widetilde Q}_1,\ldots,\hat{\widetilde Q}_6$. Multiplying Eq.~\eqref{eq:dilutedshear} cancels each adjacent $\hat{\mathcal D}_K^\dagger\hat{\mathcal D}_K$, giving
\begin{align}
&\hat{\widetilde Q}_6\cdots\hat{\widetilde Q}_1
\hat{\mathcal D}_K(\ket\psi\otimes\ket0_{\rm aux})\nonumber\\
&\qquad=\hat{\mathcal D}_K\bigl[(\hat Q_6\cdots \hat Q_1\ket\psi)\otimes\ket0_{\rm aux}\bigr].
\label{eq:dilutedcircuit}
\end{align}
The state in square brackets on the right, before the leading $\hat{\mathcal D}_K$, is exactly the output of the original $4m$-mode six-shear circuit, embedded with auxiliary vacuum. Hence, the enlarged circuit differs from the original one only by $\hat{\mathcal D}_K$ at the output. This remaining splitter mixes submodes only within each block and preserves the block's total photon number. We can therefore measure all submode occupations and add them within each block to recover exactly the original $4m$-mode photon-counting distribution. Adding in turn the two blocks associated with each $(B0_i,B1_i)$ output pair, as in Step 1, gives the original boson-sampling distribution. Thus, no inverse splitters are needed at the output.

We have now specified an exact bosonic sampler on $4mK$ modes: prepare the distributed input, apply six enlarged shears, measure, and add the counts. Each submode still has its full Fock space. Increasing $K$ spreads the occupation over more submodes; the next step uses this to construct a finite qubit circuit.

\section{Step 3: Qubit implementation}\label{sec:submodes}
We now turn the $4mK$-mode sampler from Step 2 into a qubit circuit. We truncate each submode's quadratures to occupations $0,\ldots,s$ and exponentiate the resulting finite shear Hamiltonians. Each retained Fock basis is encoded in qubits. Local basis changes and coherent counting implement the six finite shears in logarithmic depth. We conclude with explicit input preparation and measurement postprocessing, completing the sampler; Section~\ref{sec:error} then chooses $s$, $K$, and the gate precision.

\subsection{Local truncation and qubit encoding}
Introduce an occupation cutoff $s\ge1$, whose value will be chosen in Section~\ref{sec:resources} together with the number of submodes $K$. In each submode retain the Fock states $\ket0,\ldots,\ket s$ and define the truncated annihilation and quadrature operators
\begin{align}
\hat{\bar c}\coloneqq\sum_{r=1}^s\sqrt r\ket{r-1}\bra r,~~
\hat{\bar q}\coloneqq\frac{\hat{\bar c}+\hat{\bar c}^\dagger}{\sqrt2},~~
\hat{\bar p}\coloneqq\frac{\hat{\bar c}-\hat{\bar c}^\dagger}{i\sqrt2}.
\label{eq:submode}
\end{align}

The exact $q$-shear on the submodes has Hamiltonian
\begin{align}
\hat H(A)\coloneqq\frac1{2K}\sum_{i,j=1}^{4m} A_{ij}\sum_{\mu,\nu=1}^K \hat q_{i\mu}\hat q_{j\nu}.
\end{align}
Substituting $\hat{\bar q}$ for each local quadrature gives the finite Hamiltonian
\begin{align}
\hat{\bar H}(A)\coloneqq\frac1{2K}\sum_{i,j=1}^{4m} A_{ij}
\left(\sum_{\mu=1}^K \hat{\bar q}_{i\mu}\right)
\left(\sum_{\nu=1}^K \hat{\bar q}_{j\nu}\right).
\label{eq:finiteH}
\end{align}

Exponentiating the finite Hamiltonian defines the truncated shear $\hat{\bar Q}_q(A)\coloneqq e^{-i\hat{\bar H}(A)}$. Replacing $\hat{\bar q}$ by $\hat{\bar p}$ gives $\hat{\bar Q}_p(A)$. These are unitary operators on the tensor product of the $4mK$ truncated local Fock spaces.

We encode the truncated Fock basis of each submode in a qubit register. The register stores its occupation number through $\ket r\mapsto\ket{\operatorname{bin}(r)}$, $0\le r\le s$, using $b_s\coloneqq\lceil\log_2(s+1)\rceil$ qubits per submode, giving $4mKb_s$ data qubits in total. The cutoff $s$ determines the size of the local operators, while $K$ controls the dilution. Their eventual choices balance the local register size against the truncation error.

When $s+1$ is a power of two, all computational basis states of the submode register are used. For other $s$, extend $\hat{\bar q},\hat{\bar p}$ by zero on unused labels and extend the local basis changes unitarily. The finite dynamics preserves the retained submode subspace. At readout, every binary label is interpreted as its nonnegative integer occupation, including labels above $s$. This convention identifies the entire qubit register with the corresponding Fock levels and allows the same photon-counting rule to be applied to the finite and ideal states.

\subsection{Parallel implementation of a truncated shear}
Our goal is to implement the truncated shear coherently on an arbitrary state of the submode registers. For the $q$-shear, the required map is
\begin{align}
\ket A_P\ket\psi_D\ket0_W
\longmapsto
\ket A_P\bigl(e^{-i\hat{\bar H}(A)}\ket\psi_D\bigr)\ket0_W.
\nonumber
\end{align}
Here, $P$ holds the binary coefficients, $D$ contains the $4mK$ submode registers, and $W$ contains temporary counting and fanout registers. The coefficients remain unchanged and all temporary registers return to zero. The map must preserve superpositions, including those produced by preceding shears. The $p$-shear has the same requirement with $\hat{\bar q}$ replaced by $\hat{\bar p}$ in the Hamiltonian.

To realize this map, we analyze the Hamiltonian in Eq.~\eqref{eq:finiteH}. A local basis change makes it diagonal; hence, its exponential can be implemented by applying the corresponding phase to each basis string. We compute the counts that determine this phase, apply the phase coherently, and uncompute the counts before returning to the occupation basis. The following proposition states the depth and qubit cost of this construction.

\begin{proposition}[Qubit circuit for a truncated shear]\label{prop:shearresource}
Fix an integer $s\ge1$. The constants hidden in $O(\cdot)$ in this subsection may depend on this fixed cutoff. Suppose each coefficient $A_{ij}$ is supplied as a two's-complement fixed-point number of at most $b_A$ bits. For $K\ge2$, the encoded unitary $\hat{\bar Q}_q(A)$ or $\hat{\bar Q}_p(A)$ can be implemented exactly in the model with continuous gates with
\begin{align}
D_s&=O(\log(4mK)+\log(b_A+1)),\nonumber\\
W_s&=O(4mK+(4m)^2b_A\log^2K).
\label{eq:shearresource}
\end{align}
All temporary counting and fanout registers return to zero for every submode input. Six shears preserve the same asymptotic bounds. The value of $b_A$ required for the final accuracy is chosen in Section~\ref{sec:resources}.
\end{proposition}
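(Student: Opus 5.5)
The plan is to diagonalize the finite shear Hamiltonian by a single fixed local basis change, and then to implement its exponential as a coherent phase computed from the collective counts of the eigenvalue labels. Because $\hat{\bar q}$ is a fixed Hermitian $(s+1)\times(s+1)$ matrix, it has a spectral decomposition $\hat{\bar q}=\hat W_s\,\diag(x_0,\ldots,x_s)\hat W_s^\dagger$, with real eigenvalues $x_0,\ldots,x_s$ that depend only on $s$. For unused labels when $s+1$ is not a power of two we extend $\hat W_s$ unitarily and assign eigenvalue $0$. Applying $\hat W_s^\dagger$ in parallel to all $4mK$ submode registers costs constant depth, since each register has $b_s=O(1)$ qubits and, in the continuous-gate model, each local unitary is exact with $O(1)$ depth. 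In the rotated basis $\ket{\bm r}$, with $\bm r=(r_{i\mu})$, the Hamiltonian of Eq.~\eqref{eq:finiteH} is diagonal with eigenvalue
\begin{align}
\frac1{2K}\sum_{i,j=1}^{4m}A_{ij}S_iS_j,\qquad S_i\coloneqq\sum_{\mu=1}^K x_{r_{i\mu}}=\sum_{\ell=0}^{s}x_\ell\,N_{i\ell},
\end{align}
where $N_{i\ell}\in\{0,\ldots,K\}$ is the number of submodes in block $i$ carrying label $\ell$. Hence $\hat{\bar Q}_q(A)=\hat W^{\otimes}e^{-i\hat\Phi}\hat W^{\dagger\otimes}$, where $\hat\Phi$ is diagonal with the phase $\frac1{2K}\sum_{i\neq j}A_{ij}\sum_{\ell,\ell'}x_\ell x_{\ell'}N_{i\ell}N_{j\ell'}$. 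The zero diagonal of $A$ from Proposition~\ref{prop:six} removes the $i=j$ terms. The $p$-shear is handled identically, with $\hat{\bar p}$ unitarily equivalent to $\hat{\bar q}$ via a local phase.

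The first circuit step is coherent counting. For each block $i$ and label $\ell$, we compute an indicator bit $[r_{i\mu}=\ell]$ for every submode in constant depth. We then sum these $K$ bits into a $\lceil\log_2(K+1)\rceil$-bit register $N_{i\ell}$ using a reversible adder tree, with carry-save (Wallace) compression to keep the depth at $O(\log K)$ rather than $O(\log^2K)$. This takes $O(K)$ ancillas per block and $O(4mK)$ in total, all of which are cleared by running the tree in reverse after the phase step.

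The second step applies the phase. The phase is a sum over pairs $(i,j)$ with $i<j$ and labels $(\ell,\ell')$ of terms $\theta\,A_{ij}N_{i\ell}N_{j\ell'}$. Expanding each quantity in binary, $A_{ij}=\sum_a 2^a\alpha_{ij,a}$ (with a sign on the top bit) and $N_{i\ell}=\sum_b2^b\beta_b$, $N_{j\ell'}=\sum_c2^c\gamma_c$, turns each term into a sum of triple-product bits $\alpha\beta\gamma$ weighted by fixed angles $\theta_{abc,\ell\ell'}$. Each such term is a doubly controlled phase gate, which is an exact constant-depth gate in the continuous-gate model. Because all these phases commute, they can be applied in parallel provided the control bits are first copied. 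We therefore fan out every $N_{i\ell}$ register $O(4m)$ times and every coefficient bit $O(\log^2K)$ times using CNOT trees of depth $O(\log(4m))$ and $O(\log(b_A+1)+\log\log K)$. This gives one copy per triple, for $(4m)^2b_A\log^2K\cdot O(s^2)$ ancillas, which matches $W_s$. All triple phase gates then act on disjoint qubits and form a single layer. Finally we uncompute the fanouts and counts, and undo $\hat W_s$.

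Summing the depths gives $O(\log(4mK)+\log(b_A+1))$, and composing six shears only multiplies by a constant. The step I expect to be the main obstacle is the bookkeeping of exactness and the disjointness of the phase layer. One part of this is verifying that every ancilla returns to $\ket0$ on arbitrary superpositions. This holds because the counts and fanouts are classical reversible functions of the basis label, and the phase gates are diagonal, so uncomputation is exact. The other part is arranging the fanout so that the width is $O((4m)^2b_A\log^2K)$ rather than picking up an additional factor. If the carry-save depth bound proves awkward, I would fall back on a plain adder tree with $O(\log K)$ levels of $O(\log\log K)$-depth carry-lookahead adders. That yields $O(\log K\log\log K)$ depth, so the carry-save structure is needed to obtain the stated bound.
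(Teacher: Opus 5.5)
Your proposal is correct and follows essentially the paper's own route. It uses a fixed local diagonalization of $\hat{\bar q}$ (conjugated by $\diag(i^r)$ for $\hat{\bar p}$), carry-save histogram counting in depth $O(\log K)$ with $O(4mK)$ ancillas, and a binary expansion of the phase into three-bit controlled phase gates made disjoint by CNOT-tree fanout in the style of Moore and Nilsson. It also reverses all temporary registers exactly, as the paper does.

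The one slip is in the fanout multiplicities. Each bit of a count register $N_{i\ell}$ occurs in $\Theta(4m\,b_A\log K)$ triples, not $O(4m)$, so it must be copied that many times to get one copy per triple. This still gives total width $O((4m)^2b_A\log^2K)$ and fanout depth $O(\log(4m)+\log(b_A+1)+\log\log K)$, so your stated bounds are unaffected.
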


In our sampler, the input consists of the matrix-entry bits of $U$, rather than a separate list of shear coefficients. By Step 1, each nonconstant coefficient is a signed real or imaginary part of an entry of $U$; the encoder coefficients are fixed constants. Thus, the proposition uses the input bits of $U$ with the prescribed indexing and signs, together with fixed approximations to the encoder constants. No input-dependent matrix decomposition is required.

The Hamiltonian is quadratic in local quadratures acting on separate registers. Its diagonal form therefore follows from the eigendecomposition of a single local truncated position operator,
\begin{align}
\hat{\bar q}=\hat{\mathcal W}_s\diag(\lambda_0,\ldots,\lambda_s)\hat{\mathcal W}_s^\dagger.
\end{align}
For fixed $s$, this matrix has constant dimension and is independent of $U$. Its eigenvalues and eigenbasis matrix are computed classically to the precision needed to synthesize the circuit over a finite gate set, as described in Appendix~\ref{sec:uniformity}. The required precision, and hence the classical preparation cost, can increase with system size. The quantum circuit applies the resulting gates for the basis change; it does not run an eigendecomposition algorithm. Only this local eigendecomposition is prepared classically; the spectrum of the full Hamiltonian is never tabulated. The local eigenvalues enter the fixed parameters of the phase gates.

Write $\ket{v_r}\coloneqq\hat{\mathcal W}_s\ket r$. Here, $\ket r$ always denotes the computational basis state of a submode register. Before the basis change it encodes occupation $r$; after applying $\hat{\mathcal W}_s^\dagger$, the same label $r$ identifies the quadrature eigenvector $\ket{v_r}$ and hence the eigenvalue $\lambda_r$. Accordingly, the counts $h_{i,r}$ below count eigenvalue labels rather than occupations. Applying $\hat{\mathcal W}_s^\dagger$ to each submode maps $\ket{v_r}$ to $\ket r$, preserving superpositions. Let $\hat{\mathcal V}_s\coloneqq\hat{\mathcal W}_s^{\otimes 4mK}$ and write $\ket{\bm r}\coloneqq\bigotimes_{i,\mu}\ket{r_{i\mu}}$. For each label string, let $h_{i,r}\coloneqq\#\{\mu:r_{i\mu}=r\}$. The shear in these coordinates acts as
\begin{align}
&\hat{\mathcal V}_s^\dagger\hat{\bar Q}_q(A)\hat{\mathcal V}_s\ket{\bm r}
=e^{-iE_A(\bm h(\bm r))}\ket{\bm r},\\
&E_A(\bm h)\coloneqq\frac1{2K}\sum_{i,j=1}^{4m}A_{ij}
\left(\sum_{r=0}^s\lambda_rh_{i,r}\right)
\left(\sum_{t=0}^s\lambda_th_{j,t}\right).
\label{eq:histphase}
\end{align}
Here, $E_A(\bm h)$ is the Hamiltonian eigenvalue, and $e^{-iE_A(\bm h)}$ is the shear eigenvalue for the corresponding joint eigenbasis state. It depends on the label string through its counts; hence, it is not a common global phase.

The full circuit first applies $\hat{\mathcal V}_s^\dagger$ to change from the occupation basis to the quadrature-eigenvalue labels. It then computes the counts and applies the diagonal phase. Writing $H$ for the histogram subregister of $W$, these middle steps are
\begin{align}
\ket A_P\ket{\bm r}_D\ket0_H
&\longmapsto
\ket A_P\ket{\bm r}_D\ket{\bm h(\bm r)}_H\nonumber\\
&\longmapsto
e^{-iE_A(\bm h(\bm r))}
\ket A_P\ket{\bm r}_D\ket{\bm h(\bm r)}_H\nonumber\\
&\longmapsto
e^{-iE_A(\bm h(\bm r))}
\ket A_P\ket{\bm r}_D\ket0_H.
\end{align}
The three displayed maps act coherently: the first computes the histogram, the second applies the phase, and the third uncomputes the histogram. Finally, $\hat{\mathcal V}_s$ returns the data registers to the occupation basis. For $\hat{\bar p}$, the corresponding matrix for the local basis change is $\diag(i^r)\hat{\mathcal W}_s$, with $r=0,\ldots,s$.

For fixed $s$, all counts $h_{i,r}$ can be computed coherently and in parallel using quantum carry-save arithmetic~\cite{Gossett} followed by a ripple-carry adder~\cite{Cuccaro}, in depth $O(\log K)$, using $O(4mK)$ auxiliary qubits. Expanding the counts and the $b_A$-bit coefficients in binary expresses $e^{-iE_A(\bm h)}$ as a product of $O((4m)^2b_A\log^2K)$ constant-arity controlled phase gates. The parallelization of diagonal gates of Moore and Nilsson~\cite[Proposition 5]{MNcodes} uses CNOT trees to provide disjoint copies of the control bits, allowing all these gates to be applied in parallel. The fanout, phase application, and reversal have depth $O(\log(4mK)+\log(b_A+1))$. Reversing the fanout and counting circuits returns all auxiliary registers to zero.

Together with the constant-depth local basis changes, these constructions give the resource scaling stated in Proposition~\ref{prop:shearresource}. Appendix~\ref{sec:uniformity} provides the detailed proof, including the ancilla count, and establishes uniform circuit generation.

In the precision regime chosen in Section~\ref{sec:error}, $\log(b_A+1)=O(\log(4mK))$; hence, Proposition~\ref{prop:shearresource} gives $D_s=O(\log(4mK))$. This construction is exact in the model with continuous gates. Section~\ref{sec:resources} approximates the local basis changes and phase rotations over Clifford+$T$ at the precision required for the final sampling guarantee.
\subsection{Input preparation and photon number readout}\label{sec:readout}
The circuit starts in the qubit encoding of the split Fock input and ends with occupation measurements. Both the input preparation and the processing of the output have logarithmic depth.

The input required by Step 2 is the qubit encoding of $\hat{\mathcal D}_K(\ket\psi\otimes\ket0_{\rm aux})$. Since each occupied input mode contains one photon and every other input mode is in vacuum, it suffices to implement the splitter on the vacuum and one-photon sectors of each block. Vacuum is represented by all zeros and is fixed, while $\hat{\mathcal D}_K$ maps a one-photon input to
\begin{align}
\ket1\mapsto \frac{1}{\sqrt{K}}\sum_{\mu=1}^K\ket{0\cdots1_\mu\cdots0}.
\label{eq:onephotondilution}
\end{align}
We choose $K$ to be a power of two in Section~\ref{sec:resources}; rounding upward changes it by less than a factor of two. Starting with one excitation in the first submode, use a two-qubit splitter that fixes $\ket{00}$ and maps $\ket{10}$ to $(\ket{10}+\ket{01})/\sqrt2$. One exact implementation is CNOT from the second bit to the first, controlled Hadamard from the first to the second, and the same CNOT again. On $\ket{10}$, the three stages give $\ket{10}\mapsto\ket{10}\mapsto(\ket{10}+\ket{11})/\sqrt2\mapsto(\ket{10}+\ket{01})/\sqrt2$. Controlled Hadamard has an exact constant-size Clifford+$T$ decomposition. Applying these splitters in parallel along a balanced binary tree doubles the number of populated branches in every layer, so $\log_2K$ layers prepare Eq.~\eqref{eq:onephotondilution}. Only the least significant occupation bit of each submode register is used, and the higher bits remain zero.

When $n$ is supplied in binary, reversibly compute $f_j\coloneqq[j\le n]$ for $j=1,\ldots,m$ and write $f_j$ directly into the least significant bit of the first submode of $A0$ block $j$, uncomputing the comparison workspace. All $m$ comparisons run in parallel in $O(\log m)$ depth. Applying the splitting trees to all blocks in parallel then prepares the required input in depth $O(\log(mK))$. The prepared state has local occupation at most one and therefore lies in the truncated subspace.

Combining the input preparation, the truncated shears, and the readout gives the complete qubit sampler for fixed $s$ and $K$:
\begin{enumerate}
\item Prepare one uniform excitation over the $K$ submodes of every occupied $A0$ block and prepare all other submodes in vacuum.
\item Apply the six truncated shears in the order specified by Proposition~\ref{prop:six}, using the local basis changes and parallel phases above.
\item Measure every submode register. If $n_{B0,i\mu}$ and $n_{B1,i\mu}$ are the measured occupations in the two blocks for output pair $i$, report $n_i\coloneqq\sum_{\mu=1}^K\bigl(n_{B0,i\mu}+n_{B1,i\mu}\bigr)$, $1\le i\le m$.
Discard the $A0,A1$ measurement records. If the reported total $\sum_{i=1}^m n_i$ differs from $n$, output the failure symbol $\perp$.
\end{enumerate}
On the ideal output, the failure symbol never occurs and the reported pattern has distribution $P_U$ by Step 2. The state approximation bound below therefore controls the total-variation distance including the failure outcome, without postselection. If only valid patterns are desired, mapping $\perp$ to any fixed valid pattern cannot increase this distance.

All sums and the test of total photon number can also be computed coherently before measurement. Apply the same carry-save constructions and final adders~\cite{Gossett,Cuccaro} to the $2K$ occupation registers of fixed width in each output pair, and compare the reported total with the specified photon number $n$. Parallel summation, comparison, and computation of the failure flag have depth $O(\log(mK))$ and use $O(mK)$ auxiliary qubits. Thus, either classical postprocessing of the measured occupations or coherent readout gives the same distribution within our resource bounds. The following section chooses $K$ and the gate precision to bound its error by $\epsilon$.

\section{Approximation error and resource bounds}\label{sec:error}
This section proves that the finite qubit sampler from Step 3 approximates the exact bosonic sampler from Step 2. There are three errors to control: truncating each submode, using finite-precision coefficients, and synthesizing the elementary rotations over Clifford+$T$. We first bound the truncation error by following the ideal state through the six shears. We then choose $s$, $K$, and the two finite precisions so that the sum of the three errors is at most $\epsilon$.

\subsection{Dynamical error from truncated submodes}
To prove Theorem~\ref{thm:approx}, we must show that the local truncation in Step 3 preserves the target distribution to the required accuracy. In particular, increasing the number $K$ of submodes must suppress the error while the cutoff $s$ remains fixed. Small leakage at the output alone is not sufficient, because truncation changes the evolution throughout the six shears. The following theorem bounds the resulting error in the full output state.

\begin{theorem}[Local truncation error]\label{thm:truncationerror}
Let $\ket{\Psi_{\mathrm{id}}(6)}$ and $\ket{\Psi_{\mathrm{tr}}(6)}$ denote the final states of the ideal and locally truncated six-shear circuits, respectively, starting from the specified single-photon Fock input after splitting into submodes. Then
\begin{align}
\norm{\ket{\Psi_{\mathrm{tr}}(6)}-\ket{\Psi_{\mathrm{id}}(6)}}
\le C_s\frac{(n+4m+2s+8)^{(s+3)/2}}{K^{s/2}},
\label{eq:truncationbound}
\end{align}
where $C_s\coloneqq(1+6\sqrt3)2^{s+3}/\sqrt{(s+1)!}$.
\end{theorem}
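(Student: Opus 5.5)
The bound will come from a standard telescoping (hybrid) argument over the six shears, combined with a leakage estimate for the ideal intermediate states. Let $\ket{\Psi_{\mathrm{id}}(\ell)}$ and $\ket{\Psi_{\mathrm{tr}}(\ell)}$ be the states after $\ell$ shears. We identify the truncated local Fock spaces with the subspaces spanned by $\ket0,\ldots,\ket s$ in each submode. Let $\hat\Pi$ be the projector onto states in which every submode has occupation at most $s$. Since each truncated shear $\hat{\bar Q}_\ell$ is unitary on the truncated space, we have
\begin{align}
\norm{\ket{\Psi_{\mathrm{tr}}(\ell)}-\ket{\Psi_{\mathrm{id}}(\ell)}}\le\norm{\ket{\Psi_{\mathrm{tr}}(\ell-1)}-\ket{\Psi_{\mathrm{id}}(\ell-1)}}+\norm{(\hat{\bar Q}_\ell-\hat{\widetilde Q}_\ell)\ket{\Psi_{\mathrm{id}}(\ell-1)}},
\end{align}
with $\hat{\bar Q}_\ell$ extended by zero off the truncated space. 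The problem therefore reduces to a one-step defect for each shear acting on the known ideal state $\ket{\Psi_{\mathrm{id}}(\ell-1)}=\hat{\mathcal D}_K[(\hat Q_{\ell-1}\cdots\hat Q_1\ket\psi)\otimes\ket0]$. Summing over six steps plus a final leakage term is where the factor $(1+6\sqrt3)$ should come from.

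For the one-step defect I would use Duhamel's formula along the path $\hat Q(\tau)=e^{-i\tau\hat H}$, $\tau\in[0,1]$:
\begin{align}
(\hat{\bar Q}-\hat{\widetilde Q})\ket\Phi=-i\int_0^1 e^{-i(1-\tau)\hat{\bar H}}(\hat{\bar H}-\hat H)e^{-i\tau\hat H}\ket\Phi\,d\tau.
\end{align}
Because $\hat{\bar q}$ agrees with $\hat q$ on states whose local occupations are at most $s-1$ in the relevant factor, $(\hat{\bar H}-\hat H)$ annihilates any component in which all submodes have occupation at most $s-1$. The zero diagonal of $A$ is essential here: the Hamiltonian contains only products $\hat q_{i\mu}\hat q_{j\nu}$ acting on distinct submodes, never squares $\hat q_{i\mu}^2$, so each local factor is a single quadrature. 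Consequently $\norm{(\hat{\bar H}-\hat H)\ket\chi}$ is controlled by the weight of $\ket\chi$ on configurations in which some submode has occupation at least $s$, times $\norm{\hat H\cdots}$-type factors. These factors can be bounded by $\sqrt{s+1}$ times $\norm{A}\,\hat N$-moments of the collective quadratures, using $\norm{A}\le1$.

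The intermediate states $e^{-i\tau\hat H}\ket{\Psi_{\mathrm{id}}(\ell-1)}$ are all $\hat{\mathcal D}_K$ applied to a Gaussian-unitary image of the $n$-photon Fock input, tensored with vacuum. Partial shears with $\norm{A}\le1$ create a bounded mean photon number, at most roughly $n+4m$ plus constants per shear. The key lemma is a dilution estimate. For a state $\hat{\mathcal D}_K(\ket\phi\otimes\ket0)$, the occupation of a single submode $(i,\mu)$ is obtained from that of mode $i$ by binomial thinning with probability $1/K$. Hence
\begin{align}
\Pr[n_{i\mu}\ge s+1]\le\frac{\langle(\hat n_i)_{\underline{s+1}}\rangle}{(s+1)!\,K^{s+1}},
\end{align}
where the right side is a factorial moment, and a union bound over the $K$ submodes gives $K^{-s}$. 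I would bound the factorial moment $\langle\hat n_i^{\,s+1}\rangle$ of a shear-evolved Fock state by $(n+4m+2s+8)^{s+1}$, crudely, by writing $\hat n_i\le\hat N_{\rm tot}+\ldots$ and using that the shears map $\hat a$ to linear combinations of $\hat a,\hat a^\dagger$ with coefficients of size at most $2$. Taking square roots gives $K^{-s/2}$ together with the $1/\sqrt{(s+1)!}$ and powers of $2$ in $C_s$.

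I expect the main obstacle to be controlling the high moments of the evolved states uniformly in $\tau$. This requires a clean bound on $\langle(\hat n+c)^{s+3}\rangle$ for partially sheared Fock states, where the extra powers absorb the quadrature factor in $\hat H-\hat{\bar H}$. My first attempt will be to expand $\hat q=(\hat a+\hat a^\dagger)/\sqrt2$, apply the symplectic matrix $\begin{pmatrix}I&0\\-\tau A&I\end{pmatrix}$, and bound normally ordered moments combinatorially by counting terms.
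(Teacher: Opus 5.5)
Your ingredients (Duhamel's formula, factorial-moment dilution, an order-$(s+3)$ moment bound) are the right ones. However, the one-step defect as you set it up has a genuine gap. Your hybrid step compares the truncated and ideal shears on the \emph{full} ideal state $\ket{\Psi_{\mathrm{id}}(\ell-1)}$. As you observe, $\hat{\bar H}-\hat H$ is then already nonzero on configurations in which some submode has occupation exactly $s$, because $\hat H$ raises such a submode to $s+1$. If you bound the defect by the weight on ``some submode $\ge s$'', the thinning argument gives $\sum_{i,\mu}\langle\fall{\hat n_{i\mu}}{s}\rangle/s!=K^{1-s}\sum_i\langle\fall{\hat n_i}{s}\rangle/s!$. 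That is an amplitude of order $K^{-(s-1)/2}$, not $K^{-s/2}$. For $s=1$, the cutoff used in Theorem~\ref{thm:approx}, this quantity does not decay with $K$ at all. Your dilution estimate is stated for occupation $\ge s+1$, which is not the quantity your defect requires. The factor $K^{-1/2}$ multiplying each local quadrature in $\hat H=\tfrac12\sum_{ij}A_{ij}\hat q_i^{\mathrm{col}}\hat q_j^{\mathrm{col}}$ could in principle recover the missing power, but only through a near-orthogonality argument over $\mu$ that you do not give; a triangle inequality over the $K$ submodes loses it again. The constant bookkeeping also fails as claimed, because your telescoping has no separate final leakage term. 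With $\hat{\bar Q}_\ell$ extended by zero, each step carries its own $\norm{\hat P^\perp\ket{\Psi_{\mathrm{id}}(\ell)}}$, which gives $6+6\sqrt3$. Your Duhamel formula actually uses the unitary extension $e^{-i\hat P\hat H\hat P}$, since $e^{-i(1-\tau)\hat{\bar H}}$ is the identity, not zero, off the cutoff space. With that extension the source contains the outflow $-\hat P^\perp\hat H\ket{\Psi_{\mathrm{id}}}$ in addition to $-\hat P\hat H\hat P^\perp\ket{\Psi_{\mathrm{id}}}$.

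The missing idea is to compare with the \emph{projected} ideal state, $\ket{e(t)}\coloneqq\ket{\Psi_{\mathrm{tr}}(t)}-\hat P\ket{\Psi_{\mathrm{id}}(t)}$; equivalently, telescope against $\hat P\ket{\Psi_{\mathrm{id}}(\ell)}$. Then $i\partial_t\ket{e}=\hat P\hat H_K\hat P\ket{e}-\hat P\hat H_K\hat P^\perp\ket{\Psi_{\mathrm{id}}}$. The source now involves only $\hat P^\perp\ket{\Psi_{\mathrm{id}}}$ (occupation $\ge s+1$), and the flow through the boundary layer is paid exactly once, by $\norm{\hat P^\perp\ket{\Psi_{\mathrm{id}}(6)}}$. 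This is where $1+6\sqrt3$ comes from. Alternatively, in your unitary-extension version, $\hat H_K\hat V_K=\hat V_K\hat H_\ell$ turns the outflow into the $\ge s+1$ leakage of the split state $\hat V_K\hat H_\ell\ket\varphi$, where $\ket{\Psi_{\mathrm{id}}}=\hat V_K\ket\varphi$; this yields $K^{-s/2}$, but with a constant larger than $C_s$.

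Three further ingredients, which your sketch only gestures at, are then needed:
\begin{itemize}
\item Since $\hat P^\perp\ket{\Psi_{\mathrm{id}}}$ is no longer a split state, $\hat H_K$ must be bounded relative to $\Nop$ on the full $4mK$-mode space. This uses $\hat N_{\mathrm{col}}\le\Nop$ and a three-sector Cauchy--Schwarz, which is the source of the $\sqrt3$.
\item The dilution identity must be the weighted one, $\hat V_K^\dagger(\Nop+c)^2\fall{\hat n_{i\mu}}{s+1}\hat V_K=K^{-(s+1)}(\Nop+c)^2\fall{\hat n_i}{s+1}$, rather than a bare probability.
\item The norm-$2$ bound on the intermediate quadrature maps is not automatic. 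Submultiplicativity gives up to $8$ during the third shear of a group, which would inflate $2^{s+3}$ to roughly $8^{s+3}$. You need $S_q(tA)S_p(B)S_q(A)=S_q((t-1)A)S_{\mathrm{fin}}$ and the convex combination $S_p(tB)S_q(A)=(1-t)S_q(A)+tS_q(-A)S_{\mathrm{fin}}$, with $S_{\mathrm{fin}}$ orthogonal.
\end{itemize}
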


Here the truncated state is identified with its natural embedding in the full $4mK$-mode Fock space, and the comparison is made before coefficient rounding and elementary-gate approximation. For fixed $s$, the bound decreases as $K^{-s/2}$; hence, polynomially many submodes suffice for inverse-polynomial truncation error. The proof has three ingredients. First, let $\hat P$ project onto occupations $0,\ldots,s$ in every submode and put $\hat P^\perp\coloneqq\hat I-\hat P$. Because every shear coefficient matrix has zero diagonal, its quadratic terms act on distinct submodes. The local truncation therefore gives exactly the compressed Hamiltonian,
\begin{align}
\hat{\bar H}(A)=\hat P\hat H(A)\hat P\big|_{\im\hat P}.
\nonumber
\end{align}
Second, parametrize each of the six shears by one unit of time, with $\hat H_K(t)$ the ideal Hamiltonian on the submodes and $\ket{\Psi_{\mathrm{id}}(t)}$ its trajectory. Comparing the truncated evolution with the full ideal evolution, a Duhamel argument gives
\begin{align}
&\norm{\ket{\Psi_{\mathrm{tr}}(6)}-\ket{\Psi_{\mathrm{id}}(6)}}\nonumber\\
&\quad\le\norm{\hat P^\perp\ket{\Psi_{\mathrm{id}}(6)}}+
\int_0^6\norm{\hat P\hat H_K(t)\hat P^\perp\ket{\Psi_{\mathrm{id}}(t)}}\,dt.
\label{eq:comparisonoverview}
\end{align}
Thus, both the final leakage and its dynamical effect must be controlled. The Hamiltonian grows at most linearly with total photon number, reducing both terms to a weighted leakage estimate along the ideal trajectory.

Finally, uniform splitting suppresses this weighted leakage by $K^{-s/2}$, with a prefactor determined by photon number moments on the original $4m$ modes. Although individual shears can temporarily create photons, the quadratures are amplified by at most a constant factor throughout the six-shear evolution. This bounds the required moments and yields Theorem~\ref{thm:truncationerror}. Appendix~\ref{app:error} gives the precise embedding, the local compression identity, and the complete proof.

\subsection{Parameter choices, circuit resources, and sampling guarantee}\label{sec:resources}
We now choose the construction parameters and combine the three errors with the readout from Step 3. Fix a constant integer cutoff $s\ge1$, independent of $n,m,\epsilon$. Accuracy is obtained by increasing $K$, while the local register dimension stays fixed. The choice $s=1$ suffices for the main theorem; for the corollary giving width arbitrarily close to quadratic, we choose a larger constant $s$ depending on the requested accuracy and width exponents. Throughout this subsection, constants hidden in $O(\cdot)$ may depend on this fixed cutoff. We first choose $K$, then the coefficient precision, and finally the accuracy of each synthesized rotation.

For total error $0<\epsilon<1$, allocate $\epsilon/3$ each to the truncation error, coefficient precision, and gate synthesis. Theorem~\ref{thm:truncationerror} shows that the first contribution is at most $\epsilon/3$ if $K$ is a power of two satisfying
\begin{align}
K\ge
\left(\frac{3C_s}{\epsilon}\right)^{2/s}
(n+4m+2s+8)^{(s+3)/s}.
\label{eq:Kchoice}
\end{align}
For a single circuit layout that accepts $n$ as an input, choose the smallest power of two satisfying Eq.~\eqref{eq:Kchoice} with $n$ replaced by $m$. This choice is independent of the input bits and, since $n\le m$, gives
\begin{align}
K=O\!\left(m^{1+3/s}\epsilon^{-2/s}\right).
\end{align}

Let $\widetilde A$ be the coefficient matrix obtained from the finite-precision entries supplied for $U$ or from approximating the fixed encoder coefficients, with the same symmetry and zero diagonal as the ideal matrix $A$, and suppose
\begin{align}
\abs{A_{ij}-\widetilde A_{ij}}\le\eta_A
\qquad\text{for all }i,j.
\label{eq:coefficienterror}
\end{align}
Since $\norm{\hat{\bar q}}\le\sqrt{2s}$, each normalized collective quadrature $K^{-1/2}\sum_{\mu=1}^K\hat{\bar q}_{i\mu}$ in Eq.~\eqref{eq:finiteH} has norm at most $\sqrt{2sK}$. Submultiplicativity bounds the contribution of one coefficient error by $sK\eta_A$. The triangle inequality over the $(4m)^2$ entries gives
\begin{align}
\norm{\hat{\bar H}(A)-\hat{\bar H}(\widetilde A)}\le (4m)^2sK\eta_A.
\label{eq:Hcoefficienterror}
\end{align}
For Hermitian matrices $H$ and $\widetilde H$, the standard bound $\|e^{-iH}-e^{-i\widetilde H}\|\le\|H-\widetilde H\|$ and Eq.~\eqref{eq:Hcoefficienterror} bound the error of one shear by $(4m)^2sK\eta_A$. Telescoping the products of the six shears therefore gives a total coefficient error of at most $6(4m)^2sK\eta_A$. With a fixed number of sign and integer bits, a $b_A$-bit fixed-point coefficient has rounding error $O(2^{-b_A})$. It is therefore sufficient to choose
\begin{align}
\eta_A&\le\frac{\epsilon}{18(4m)^2sK},\\
b_A&=\left\lceil\log_2\frac{18(4m)^2sK}{\epsilon}\right\rceil+O(1)=O\!\left(\log\frac{mK}{\epsilon}\right).
\label{eq:entryprecision}
\end{align}
Here, $b_A$ is the coefficient bit length introduced in Proposition~\ref{prop:shearresource}. The real and imaginary parts of $U$ are requested to this precision, while the fixed encoder coefficients are approximated to the same tolerance. Signed copies of an input entry reuse its bits with the corresponding fixed signs. These finite-precision coefficients directly specify the phase gates, and the bound compares their evolution with that defined by the ideal $U$.

We next replace the continuous one-qubit rotations by Clifford+$T$ circuits. First decompose each local basis change of fixed dimension and controlled phase gate into one-qubit gates and CNOTs~\cite{Barenco}. Since $s$ is fixed, this introduces only a constant number of elementary stages per primitive. We use Clifford+$T$ synthesis with $O(\log(1/\delta))$ gates and at most two initially zero ancillas to approximate each single-qubit unitary to error $\delta$, uniformly over data inputs and including the ancillary output~\cite{KMM}. For general approximation over a finite gate set and ancilla-free $z$-rotation synthesis, see Refs.~\cite{DawsonNielsen,Ross}.

Let $G_{\rm rot}$ count the occurrences of one-qubit rotations requiring approximation in this decomposition. Across the six shears, the transformations $\hat{\mathcal V}_s^\dagger$ and $\hat{\mathcal V}_s$ surrounding each diagonal $q$-shear phase, together with the analogous $p$-basis transformations, act locally on all $4mK$ submode registers and contribute $O(4mK)$ rotations. Expanding the diagonal phases over the coefficient bits and pairs of eigenvalue-count bits contributes $O((4m)^2b_A\log^2K)$ more. Thus,
\begin{align}
G_{\rm rot}=O(4mK+(4m)^2b_A\log^2K).
\end{align}
Appendix~\ref{sec:uniformity} gives this count explicitly in Eq.~\eqref{eq:rotationcountappendix}. Input splitting, histogram counting, fanout, reversible arithmetic, and all their reversals use exact Clifford+$T$ gates and require no precision allocation.

Give each rotation its own initially zero synthesis ancillas, which are not reused by later rotations. The synthesis guarantee compares the joint data--ancilla output with the ideal rotation and zero ancillas, uniformly over the data state, including entanglement with other registers~\cite{KMM}. Replacing the rotations one at a time therefore changes the full output state by at most $G_{\rm rot}\delta$. To allocate at most $\epsilon/3$ to synthesis, choose
\begin{align}
\delta\le\frac{\epsilon}{3G_{\rm rot}}.
\end{align}
Each rotation then has a Clifford+$T$ implementation of length $O(\log(G_{\rm rot}/\epsilon))$.

This length contributes to depth only once per parallel rotation stage, not once per rotation. The local basis changes act independently on the submodes, and the phase gates act on disjoint control copies after fanout. Giving each rotation separate synthesis ancillas preserves this parallelism. For fixed $s$, decomposing these primitives introduces only a constant number of rotation stages per shear, and there are six shears. Adding their synthesis depth to the exact input preparation, counting, fanout, and readout depth gives
\begin{align}
D=O\!\left(\log(4mK)+\log\frac{G_{\rm rot}}{\epsilon}\right).
\label{eq:precisiondepth}
\end{align}
Thus, the total rotation count fixes the individual accuracy, whereas the number of sequential rotation stages fixes the depth overhead. At most two fresh ancillas per rotation add $O(G_{\rm rot})$ qubits. Any residual error in these ancillas or in the counting registers is included in the full-state bound; exact ancilla cleanup is not assumed after synthesis.

To collect the bounds, note that for fixed $s$, the chosen $K$ is polynomial in $m$ and $1/\epsilon$; hence, $\log K=O(\log(m/\epsilon))$ and $b_A=O(\log(m/\epsilon))$. Substituting into the rotation count gives $G_{\rm rot}=O(mK+m^2\log^3(m/\epsilon))$ and hence $\log(G_{\rm rot}/\epsilon)=O(\log(m/\epsilon))$. Together with Eq.~\eqref{eq:precisiondepth}, this yields
\begin{align}
D=O(\log(m/\epsilon)),~~ W=O(mK+m^2\log^3(m/\epsilon)).
\label{eq:fullresources}
\end{align}
The $O(mK)$ term includes the submode registers, counting and readout workspace, and synthesis ancillas for the local basis changes. The $O(m^2\log^3(m/\epsilon))$ term includes the control copies and synthesis ancillas for the diagonal phases, as well as the matrix-entry input registers. These bounds give the depth and width claims in Section~\ref{sec:model}.

\begin{proof}[Proof of Theorem~\ref{thm:approx}]
Use the six-shear dilation, the exact submode preparation of the specified Fock input, and the qubit circuits. The three error allocations give norm error of the full state at most $\epsilon$, which also bounds the trace distance. Apply the readout map specified in Section~\ref{sec:readout}: sum submode occupations across each $B0,B1$ output pair, discard the $A$ records, and report $\perp$ for an incorrect total. This map gives $P_U$ on the ideal state. Contractivity under measurement and deterministic processing gives $\|\widetilde P_U-P_U\|_{\TV}\le\epsilon$.

For the theorem, take $s=1$. Then $K=O(m^4\epsilon^{-2})$, and Eq.~\eqref{eq:fullresources} gives width polynomial in $m$ and $1/\epsilon$. Equation~\eqref{eq:fullresources} also gives depth $O(\log(m/\epsilon))$, as claimed. The occupation sums and validity test have logarithmic depth. Appendix~\ref{sec:uniformity} describes a logspace generator for the circuit wiring and gate sequences over the finite gate set at fixed inverse-polynomial precision. Thus, the circuit over a finite gate set, including readout, has the claimed depth, width, and uniformity.
\end{proof}

\begin{proof}[Proof of Corollary~\ref{cor:width}]
Write the fixed target accuracy as $\epsilon=m^{-c}$ with $c>0$. Fix $\alpha>2$ and choose an integer $s>(3+2c)/(\alpha-2)$. This cutoff depends only on the requested accuracy and width exponents, and stays fixed as $m$ grows. With $\epsilon=m^{-c}$ and $n\le m$, the submode choice gives $mK=O(m^{2+(3+2c)/s})$. The other width term in Eq.~\eqref{eq:fullresources} is $O(m^2\log^3m)$; hence, both terms are $O(m^\alpha)$. The depth is $O(\log m)$, and the same uniformity argument for a fixed cutoff applies.
\end{proof}

\section{Implications for decision complexity}\label{sec:decision}
The sampling theorem also has a direct interpretation in decision complexity. If an output event can be recognized by a shallow classical circuit, our sampler can be followed by that circuit to decide whether the event occurs with high or low probability. We formalize this consequence for logspace-uniform $\mathrm{NC}^1$ predicates, obtaining promise decision problems in $\mathrm{BQNC}^1$. The promise below accounts for the finite precision of the interferometer description.

Following the bounded-error convention for shallow quantum circuits~\cite{CleveWatrous}, we use $\mathrm{BQNC}^1$ for promise problems decided with bounded error by a logspace-uniform family of polynomial-width quantum circuits over a fixed finite gate set, with depth $O(\log N)$ on inputs of length $N$. The input is supplied in the computational basis and one designated output bit is measured at the end.

Fix the sampler accuracy at $1/12$ and the associated matrix-entry precision from Section~\ref{sec:resources}. The decoded finite-precision matrix $\widetilde U$ need not itself be exactly unitary. For the resulting input string $x$ of Section~\ref{sec:model}, let $\mathcal U_x$ denote the set of exact unitaries consistent with this precision: $U\in\mathcal U_x$ if every real and imaginary entry of $U$ differs from the corresponding entry of $\widetilde U$ by at most $\eta_A$, with $\eta_A$ chosen in Section~\ref{sec:resources}. The input promise requires $\mathcal U_x\ne\varnothing$. The decision promise below holds uniformly over this set, so the answer is determined by the finite string $x$.

We now state the corresponding decision problem. Let $f(x,\bm n)$ be any logspace-uniform $\mathrm{NC}^1$ predicate~\cite{AroraBarak}, where $x$ denotes the explicit interferometer description and photon number $n$, and $\bm n$ is a reported output pattern. The distribution $P_U$ always uses one photon in each of the first $n$ input modes and vacuum elsewhere. For each compatible unitary $U\in\mathcal U_x$, define
\begin{align}
p_f(x;U)\coloneqq\sum_{\bm n}P_U(\bm n)f(x,\bm n).
\label{eq:decisionacceptance}
\end{align}
Consider the robust promise that either $p_f(x;U)\ge2/3$ for every $U\in\mathcal U_x$ or $p_f(x;U)\le1/3$ for every $U\in\mathcal U_x$.

\begin{corollary}[Decision complexity]\label{cor:decision}
For every predicate $f$ above, deciding which side of this promise holds is in $\mathrm{BQNC}^1$.
\end{corollary}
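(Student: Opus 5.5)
The plan is to compose the sampler of Theorem~\ref{thm:approx}, run at accuracy $1/12$, with a logspace-uniform $\mathrm{NC}^1$ circuit for $f$. Its error guarantee, which is stated relative to any exact unitary consistent with the input bits, then converts into a constant acceptance gap. Let $N$ denote the input length. Since the accuracy is fixed and the matrix-entry precision is $b=O(\log m)$, we have $N=\Theta(m^2 b)$ and so $\log N=\Theta(\log m)$. Theorem~\ref{thm:approx} therefore gives a logspace-uniform Clifford+$T$ circuit of depth $O(\log m)=O(\log N)$ and width polynomial in $N$. The circuit reads the bits of $x$ directly and prepares, before measurement, a coherent register holding the reported pattern $\bm n$ or the symbol $\perp$. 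We encode $\perp$ as a flag bit, and $f$ is extended to reject on $\perp$ (or $\perp$ is mapped to a fixed valid pattern).

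The first step is to check that the sampler's guarantee holds uniformly over $\mathcal U_x$. The analysis of Section~\ref{sec:resources} compares the finite-precision evolution with the ideal evolution for any $U$ whose entries lie within $\eta_A$ of the decoded ones. This is exactly the coefficient-error bound \eqref{eq:coefficienterror}, used with the true $U$ rather than $\widetilde U$, and the fixed encoder constants are rounded independently of $U$. So for every $U\in\mathcal U_x$ the output distribution $\widetilde P_x$, which depends only on $x$, satisfies $\|\widetilde P_x-P_U\|_{\TV}\le1/12$.

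Next, compose with $f$. Using a standard simulation of $\mathrm{NC}^1$ by reversible Toffoli/CNOT circuits of logarithmic depth, with fanout trees for the copies of $x$ and $\bm n$ and ancillas that need not be cleaned since only one bit is measured, we compute $f(x,\bm n)$ into an output qubit. This adds depth $O(\log N)$ and polynomial width. Because $f$ is a deterministic postprocessing, the acceptance probability $\tilde p$ satisfies $|\tilde p-p_f(x;U)|\le 1/12$. Consequently $\tilde p\ge 2/3-1/12=7/12$ on yes instances and $\tilde p\le 5/12$ on no instances, a constant gap. Measuring the single output bit then decides the promise problem with bounded error, and the gap can be amplified by parallel repetition and a majority, which adds only logarithmic depth if desired.

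The main obstacle is uniformity and depth bookkeeping for the composition. We must show that a logspace generator can output the sampler (Appendix~\ref{sec:uniformity}) together with the reversible $\mathrm{NC}^1$ circuit for $f$ and the wiring between them, where the $\mathrm{NC}^1$ family is in terms of the combined input $(x,\bm n)$ of length polynomial in $N$ (since $|\bm n|=O(m\log m)$). Logspace generators compose under concatenation, and a bounded fan-in depth-$O(\log N)$ Boolean circuit converts gate by gate to a reversible circuit with one ancilla per gate. Hence the whole circuit is logspace uniform with depth $O(\log N)$, which places the problem in $\mathrm{BQNC}^1$.
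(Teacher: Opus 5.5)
Your plan is the paper's. You run the sampler at accuracy $1/12$; its guarantee holds for every $U\in\mathcal U_x$ because the coefficient-error analysis compares against any exact unitary within $\eta_A$ of the decoded entries. You then defer measurements, compute the readout and $f$ reversibly, get acceptance probability at least $7/12$ or at most $5/12$, and amplify by a majority over a constant number of copies. One step does not go through as you justify it. You claim that a bounded fan-in, depth-$O(\log N)$ Boolean circuit ``converts gate by gate to a reversible circuit with one ancilla per gate'' while keeping depth $O(\log N)$.

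An $\mathrm{NC}^1$ circuit can have unbounded fan-out at internal gates, and in the paper's depth model each qubit takes part in at most one gate per layer. Suppose an intermediate wire feeds polynomially many gates at the next level. Then those Toffolis must be serialized, or the wire must first be copied by a CNOT tree of depth $\Theta(\log N)$. Paying this at each of the $O(\log N)$ levels gives depth $O(\log^2 N)$ in the worst case. Your fanout trees for $x$ and $\bm n$ handle only the input leaves, not these internal fan-outs.

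The missing idea, which the paper uses, is to unfold the $\mathrm{NC}^1$ circuit into a formula first. A fan-in-two circuit of depth $d=O(\log N)$ becomes a tree of size at most $2^{d+1}=\poly(N)$ with the same depth. In this tree every internal gate has fan-out one, and only input bits are repeated. One round of CNOT trees of depth $O(\log N)$ then supplies the repeated input copies. After that, your gate-by-gate reversible evaluation, with one ancilla per node, does run in depth $O(d)$.

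Logspace uniformity is preserved. Each formula node can be named by its root-to-node path of $O(\log N)$ bits, and its gate type can be recovered by composing with the logspace generator of the original circuit. With this repair, your argument coincides with the paper's. One further point: if bounded error means $2/3$ versus $1/3$, the constant-copy majority is required rather than optional. The paper includes it as part of the proof, and it costs only constant depth.
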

\begin{proof}
Run a constant number of independent copies of the sampler in parallel. Measuring their output registers would produce independent samples $\bm n^{(1)},\ldots,\bm n^{(r)}$; for each sample, reject on the failure outcome by setting $f(x,\perp)=0$. For one copy, the acceptance probability differs from Eq.~\eqref{eq:decisionacceptance} by at most $1/12$, so it is at least $7/12$ in the yes case and at most $5/12$ in the no case. A majority vote over a constant number of samples therefore gives the usual bounded-error guarantee.

To express this sampling procedure as a $\mathrm{BQNC}^1$ circuit with a single final measurement, defer the measurements and compute the occupation sums, validity checks, and $f$ coherently for each copy. The $\mathrm{NC}^1$ circuit for $f$ can be unfolded into a polynomial-size formula of the same depth, with repeated input bits supplied through parallel CNOT trees~\cite{AroraBarak,MNcodes}. Evaluate these formulas reversibly, compute their majority, and measure only the resulting decision bit. This gives the same acceptance probability as the sampling procedure above, while preserving polynomial width, logarithmic depth, and logspace uniformity.
\end{proof}

\section{Discussion}\label{sec:discussion}
We have shown that the boson-sampling distribution can be simulated to inverse-polynomial total variation error by a logspace-uniform qubit circuit of logarithmic depth and polynomial width. The circuit uses Clifford+$T$ gates and a single final measurement, with the interferometer entries supplied explicitly as input bits. For any fixed inverse-polynomial accuracy, the number of qubits can be made arbitrarily close to quadratic in the number of modes. 

This result addresses the quantum resource question raised in the introduction. The conjectured classical hardness of boson sampling is compatible with a qubit implementation requiring only logarithmically many sequential gate layers, even when the original optical interferometer has much greater depth. Section~\ref{sec:decision} also gives a $\mathrm{BQNC}^1$ upper bound for promise decision problems obtained from output events recognized by uniform $\mathrm{NC}^1$ predicates, provided their probabilities have a constant promise gap. These statements locate boson sampling within shallow quantum computation, but do not establish a separation from universal quantum computation or determine its minimum simulation depth.

Several resource questions remain open. The width and constants may admit substantial improvements, and imposing spatially local qubit interactions may introduce additional depth. It is also natural to ask whether less than logarithmic depth can suffice under the present model. For interferometers specified by an exact finite description, another open question is whether exact boson sampling admits logarithmic depth with polynomially many qubits and efficiently computable parameters for continuous gates.

\begin{acknowledgments}
    Generative AI tools were used to assist with exploratory calculations, literature searches, and language editing. All results, proofs, and references were independently verified by the author, who takes full responsibility for the content of the manuscript.
    This work was supported by the National Research Foundation of Korea Grants (No. RS-2024-00431768 and No. RS-2025-00515456) funded by the Korean government (Ministry of Science and ICT (MSIT)) and the Institute of Information \& Communications Technology Planning \& Evaluation (IITP) Grants funded by the Korean government (MSIT) (No. RS-2024-00437284, No. IITP-2025-RS-2025-02283189 and No. IITP-2025-RS-2025-02263264) by Global Partnership Program of Leading Universities in Quantum Science and Technology (RS-2025-08542968) through the National Research Foundation of Korea~(NRF) funded by the Korean government (Ministry of Science and ICT(MSIT)).
\end{acknowledgments}

\appendix
\section{Three-shear factorization of phase-space rotations}\label{app:shears}
For one canonical pair $[\hat q,\hat p]=i$, write $\hat Q_q(a)=e^{-ia\hat q^2/2}$ and $\hat Q_p(b)=e^{-ib\hat p^2/2}$. The standard decomposition of a planar rotation into three shears~\cite{Paeth1986} gives, at the unitary level,
\begin{align}
e^{-i\theta(\hat q^2+\hat p^2)/2}
=\hat Q_q\!\left(\tan\frac\theta2\right)
\hat Q_p(\sin\theta)
\hat Q_q\!\left(\tan\frac\theta2\right),
\label{eq:rotationunitary}
\end{align}
for $|\theta|<\pi$. Both sides implement the same transformation of the quadratures. For these quadratic unitaries, this determines the operators up to a sign, which is fixed by continuity from $\theta=0$; hence the equality is exact~\cite{MoshinskyQuesne}.

Since $R_U$ is a real symmetric involution, an orthogonal change of basis diagonalizes it with eigenvalues $\pm1$. Applying Eq.~\eqref{eq:rotationunitary} to each eigenmode and transforming back gives
\begin{align}
\hat V_U=\hat Q_q(R_U)\hat Q_p(R_U)\hat Q_q(R_U).
\end{align}
This diagonalization is used only to prove the identity and is not performed in the circuit.

For the encoder, $C=\exp\!\left[-\frac{i\pi}{4}\begin{pmatrix}0&I_m\\I_m&0\end{pmatrix}\right]$ on $A0,A1$. Its eigenmodes have angles $\pm\pi/4$, giving position-shear strengths $\pm(\sqrt2-1)$ and momentum-shear strengths $\pm1/\sqrt2$. Thus,
\begin{align}
A_{\mathrm{enc}}=(2-\sqrt2)B_{\mathrm{enc}},\qquad
B_{\mathrm{enc}}=\frac1{\sqrt2}\begin{pmatrix}0&I_m\\I_m&0\end{pmatrix}\oplus0_{2m},
\end{align}
which reproduces Eq.~\eqref{eq:encodershears}. Since the vacuum phases $e^{-i\theta/2}$ cancel between the opposite angles, the three-shear product implements exactly $C\oplus I_{2m}$.

\section{Proof of the local truncation bound}\label{app:error}
This appendix proves Theorem~\ref{thm:truncationerror}. We first prove the theorem by invoking the estimates established in the subsequent subsections. Throughout this appendix, let $\hat P$ project onto the space in which every submode has occupation at most $s$, and let $\hat P^\perp\coloneqq\hat I-\hat P$.

\subsection{Proof of Theorem~\ref{thm:truncationerror}}
\begin{proof}
We embed the truncated local Fock space in the full $4mK$-mode Fock space by mapping each encoded occupation state to the corresponding Fock state. Since the split input has local occupation at most one, both trajectories start from the same state in $\im\hat P$.

For the error analysis, write each original $4m$-mode shear as $\hat Q_\ell=e^{-i\hat H_\ell}$ and parametrize it by $e^{-i\tau\hat H_\ell}$, where $0\le\tau\le1$. At $t=\ell-1+\tau$, let $\hat G(t)$ be the corresponding propagator on the original $4m$ modes, with $\hat G(0)=\hat I$. Thus, $i\partial_t\hat G(t)=\hat H_\ell\hat G(t)$ within this interval, $\hat G(\ell)$ is the circuit after $\ell$ completed shears, and $\hat G(6)$ is the full six-shear circuit.

The Hamiltonian that actually acts on the $4mK$ submodes during this interval is
\begin{align}
\hat H_K(t)\coloneqq
\hat{\mathcal D}_K
(\hat H_\ell\otimes\hat I_{\rm aux})
\hat{\mathcal D}_K^\dagger,
\qquad t=\ell-1+\tau.
\label{eq:enlargedHamiltonian}
\end{align}
For a $q$-shear, this is the full submode Hamiltonian $\hat H(A)$ used below; the $p$-shear is defined analogously.

Let $\ket\psi$ be the initial state on the $4m$ modes before the submode splitters, including the vacuum blocks introduced in Step 1. The enlarged circuit starts from $\ket{\Psi_{\mathrm{id}}(0)}\coloneqq\hat{\mathcal D}_K(\ket\psi\otimes\ket0_{\rm aux})$. Thus, $t=0$ is after input splitting and before the first encoder shear. Its ideal trajectory is
\begin{align}
\ket{\Psi_{\mathrm{id}}(t)}
&=\hat{\mathcal D}_K(\hat G(t)\otimes \hat I_{\rm aux})\hat{\mathcal D}_K^\dagger\ket{\Psi_{\mathrm{id}}(0)}\nonumber\\
&=\hat{\mathcal D}_K\bigl[\hat G(t)\ket\psi\otimes\ket0_{\rm aux}\bigr].
\label{eq:idealtrajectory}
\end{align}
Differentiating Eq.~\eqref{eq:idealtrajectory} and using Eq.~\eqref{eq:enlargedHamiltonian} gives $i\partial_t\ket{\Psi_{\mathrm{id}}(t)}=\hat H_K(t)\ket{\Psi_{\mathrm{id}}(t)}$. Lemma~\ref{lem:compression}, proved below, shows that local truncation retains exactly the matrix elements of $\hat H_K(t)$ inside the cutoff space. Thus, the embedded finite circuit evolves under $\hat P\hat H_K(t)\hat P$ from the same split input; denote its trajectory by $\ket{\Psi_{\mathrm{tr}}(t)}$. These are precisely the two trajectories compared in Theorem~\ref{thm:truncationerror}.

Both output states are regarded as vectors in the full Fock space. We decompose their difference into components inside and outside the cutoff subspace. Write $\ket{e(t)}\coloneqq\ket{\Psi_{\mathrm{tr}}(t)}-\hat P\ket{\Psi_{\mathrm{id}}(t)}$ for this difference. Here, $\hat P\ket{\Psi_{\mathrm{id}}(t)}$ is the instantaneous, unnormalized projection of the ideal state, whereas $\ket{\Psi_{\mathrm{tr}}(t)}$ evolves under the truncated Hamiltonian. Inserting $\hat P+\hat P^\perp=\hat I$ and applying the triangle inequality gives
\begin{align}
\norm{\ket{\Psi_{\mathrm{tr}}(6)}-\ket{\Psi_{\mathrm{id}}(6)}}
\le \norm{\ket{e(6)}}+\norm{\hat P^\perp\ket{\Psi_{\mathrm{id}}(6)}}.
\label{eq:errorsplit}
\end{align}
The first term is the error within the retained space; the second is the ideal output remaining outside it. We first bound the error within the retained space using the equations of motion, and then control the leakage terms that result.

The two trajectories start from the same state in $\im \hat P$; hence, $\ket{e(0)}=0$. Subtracting their equations of motion gives
\begin{align}
i\partial_t\ket{e(t)}=\hat P\hat H_K(t)\hat P\ket{e(t)}-\hat P\hat H_K(t)\hat P^\perp\ket{\Psi_{\mathrm{id}}(t)}.
\end{align}
The source term $\hat P\hat H_K(t)\hat P^\perp\ket{\Psi_{\mathrm{id}}(t)}$ couples the discarded part of the ideal state back into the retained space. Duhamel's formula expresses this error as the time integral of the source, propagated from each intermediate time to the output. Since $\hat P\hat H_K(t)\hat P$ is Hermitian on the retained space, this propagation is unitary and preserves the norm. Taking norms therefore gives
\begin{align}
\norm{\ket{e(6)}}\le
\int_0^6\norm{\hat P\hat H_K(t)\hat P^\perp\ket{\Psi_{\mathrm{id}}(t)}}\,dt.
\end{align}
Substituting into Eq.~\eqref{eq:errorsplit} yields
\begin{align}
&\norm{\ket{\Psi_{\mathrm{tr}}(6)}-\ket{\Psi_{\mathrm{id}}(6)}}\nonumber\\
&\quad\le\norm{\hat P^\perp\ket{\Psi_{\mathrm{id}}(6)}}+
\int_0^6\norm{\hat P\hat H_K(t)\hat P^\perp\ket{\Psi_{\mathrm{id}}(t)}}\,dt.
\label{eq:duhamel}
\end{align}
The relative Hamiltonian bound in Lemma~\ref{lem:Hrelative} converts the source term into weighted photon-number leakage:
$\norm{\hat P\hat H_K(t)\hat P^\perp\ket{\Psi_{\mathrm{id}}(t)}}\le\sqrt3\norm{(\Nop+4m+2)\hat P^\perp\ket{\Psi_{\mathrm{id}}(t)}}$. The same weighted leakage bounds the endpoint term because $\Nop+4m+2\ge\hat I$. Lemma~\ref{lem:leakage} then shows that uniform splitting suppresses this quantity by $K^{-s/2}$:
\begin{align}
&\norm{(\Nop+4m+2)\hat P^\perp\ket{\Psi_{\mathrm{id}}(t)}}\nonumber\\
&\qquad\le
\frac{\norm{(\Nop+4m+2)^{(s+3)/2}\hat G(t)\ket\psi}}
{\sqrt{(s+1)!}\,K^{s/2}}.
\label{eq:summaryleak}
\end{align}
Here, $\Nop$ acts on the submodes on the left and on the original modes on the right. The remaining numerator is an original-mode photon-number moment. Lemma~\ref{lem:moment} bounds it uniformly throughout all six shears by $[4(n+4m+2s+8)]^{(s+3)/2}$. The endpoint and the integral over six unit intervals in Eq.~\eqref{eq:duhamel} therefore give
\begin{align}
&\norm{\ket{\Psi_{\mathrm{tr}}(6)}-\ket{\Psi_{\mathrm{id}}(6)}}\nonumber\\
&\quad\le\frac{(1+6\sqrt3)2^{s+3}}{\sqrt{(s+1)!}}
\frac{(n+4m+2s+8)^{(s+3)/2}}{K^{s/2}},
\end{align}
as claimed.
\end{proof}

\subsection{Exact local compression}
The following lemma justifies the truncated Hamiltonian used in the theorem proof.

\begin{lemma}[Exact local compression]\label{lem:compression}
For every coefficient matrix produced by Proposition~\ref{prop:six},
\begin{align}
\hat{\bar H}(A)=\hat P\hat H(A)\hat P\big|_{\im \hat P}.
\label{eq:PHP}
\end{align}
The same identity holds for a $p$-shear.
\end{lemma}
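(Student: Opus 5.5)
Since $A$ has zero diagonal (Proposition~\ref{prop:six}), we would start by expanding $\hat H(A)$ as a sum of products of quadratures on distinct submodes. The diagonal blocks $i=j$ drop out, so every term has $i\ne j$. Hence each term is $\frac{1}{2K}A_{ij}\hat q_{i\mu}\hat q_{j\nu}$ with $(i,\mu)\ne(j,\nu)$: the two factors act on different tensor factors of the Fock space. Zero diagonal of $A$ removes $i=j$ entirely, including the dangerous same-submode squares $\hat q_{i\mu}^2$ that would otherwise arise when $\mu=\nu$. Note also that $\hat H(A)$ and $\hat{\bar H}(A)$ carry the same coefficients term by term, since Eq.~\eqref{eq:finiteH} is just Eq.~\eqref{eq:dilutedq}'s Hamiltonian with each local quadrature replaced.

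Next we would use that $\hat P=\bigotimes_{i,\mu}\hat P_{i\mu}$ factorizes over submodes, with $\hat P_{i\mu}$ projecting onto occupations $0,\ldots,s$. For a product $\hat X_a\hat Y_b$ acting on distinct submodes $a\ne b$, this gives
\begin{align}
\hat P\hat X_a\hat Y_b\hat P=(\hat P_a\hat X_a\hat P_a)(\hat P_b\hat Y_b\hat P_b)\otimes\textstyle\bigotimes_{c\ne a,b}\hat P_c .
\end{align}
It then remains to check the single-mode identity $\hat P_a\hat q_a\hat P_a|_{\im\hat P_a}=\hat{\bar q}$. This holds directly from Eq.~\eqref{eq:submode}: the matrix elements $\langle r|\hat q|r'\rangle$ for $0\le r,r'\le s$ are exactly $(\sqrt{r'}\delta_{r,r'-1}+\sqrt{r}\delta_{r',r-1})/\sqrt2$, which are the entries of $\hat{\bar q}$.

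For the $p$-shear the same computation goes through with $\hat p$ in place of $\hat q$, using $\hat P_a\hat p_a\hat P_a|=\hat{\bar p}$. Summing over all terms with linearity of compression then gives Eq.~\eqref{eq:PHP}. One unbounded-operator subtlety remains: $\hat H(A)$ is unbounded, but $\im\hat P$ consists of vectors with bounded occupation in each submode and hence finite total photon number, so it lies in the domain and the compression is well defined.

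The only real obstacle is the same-submode terms, and zero diagonal removes them. Without it, $\hat P\hat q^2\hat P\ne(\hat P\hat q\hat P)^2$: the $\ket{s}\bra{s}$ entry differs, because $\hat q^2$ routes through the level $s+1$. That discrepancy is precisely why the hypothesis from Proposition~\ref{prop:six} is needed.
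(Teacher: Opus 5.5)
Your proposal is correct and follows essentially the same route as the paper: the zero diagonal puts every term on two distinct submodes, the product cutoff projector then factors, and the single-mode compression $\hat P_{i\mu}^{(s)}\hat q_{i\mu}\hat P_{i\mu}^{(s)}=\hat{\bar q}_{i\mu}$ gives the identity term by term. Your domain remark and your check that $\hat q^2$ differs from $(\hat P\hat q\hat P)^2$ in the $\ket{s}\bra{s}$ entry are accurate; the paper leaves both points implicit.
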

\begin{proof}
Let $\hat P_{i\mu}^{(s)}\coloneqq\sum_{r=0}^s\ket r\bra r$ be the local cutoff projector for submode $(i,\mu)$. Because $A_{ii}=0$, every nonzero term in $\hat H(A)$ has $i\ne j$ and therefore acts on two distinct submodes $(i,\mu)$ and $(j,\nu)$. The product cutoff then factors, and, on $\im\hat P$,
\begin{align}
&\hat P\hat q_{i\mu}\hat q_{j\nu}\hat P
\big|_{\im\hat P}\nonumber\\
&\quad=\bigl(\hat P_{i\mu}^{(s)}\hat q_{i\mu}\hat P_{i\mu}^{(s)}\bigr)
\bigl(\hat P_{j\nu}^{(s)}\hat q_{j\nu}\hat P_{j\nu}^{(s)}\bigr)
=\hat{\bar q}_{i\mu}\hat{\bar q}_{j\nu}.
\end{align}
Summing these identities with the coefficients $A_{ij}/(2K)$ gives Eq.~\eqref{eq:PHP}. The same argument with $\hat p$ proves the $p$-shear identity.
\end{proof}

The identity shows that the finite circuit retains exactly the matrix elements of the ideal Hamiltonian within the cutoff space.

\subsection{A Hamiltonian bound on the full submode space}
The source term in Eq.~\eqref{eq:duhamel} applies the shear Hamiltonian to $\hat P^\perp\ket{\Psi_{\mathrm{id}}(t)}$. This projection can excite mode combinations orthogonal to the collective modes. We therefore bound the Hamiltonian on the full submode Fock space.

\begin{lemma}[Relative bound for a shear]\label{lem:Hrelative}
Let $\Nop$ denote total photon number on all $4mK$ submodes. For $\norm A\le1$ and any state $\ket\xi$ in the full $4mK$-mode Fock space for which the right-hand side below is finite,
\begin{align}
\norm{\hat H(A)\ket\xi}\le\sqrt3\norm{(\Nop+4m+2)\ket\xi}.
\label{eq:Hrelative}
\end{align}
The same bound holds for a $p$-shear.
\end{lemma}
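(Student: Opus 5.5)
The plan is to work entirely in the \emph{collective} modes. Introduce $\hat b_i\coloneqq K^{-1/2}\sum_{\mu=1}^K\hat a_{i\mu}$ for $1\le i\le4m$. These are canonical, $[\hat b_i,\hat b_j^\dagger]=\delta_{ij}$, and they can be completed by orthogonal combinations to a passive basis of the $4mK$ submodes. Hence the collective number operator $\Nop_c\coloneqq\sum_i\hat b_i^\dagger\hat b_i$ commutes with $\Nop$ and satisfies $0\le\Nop_c\le\Nop$.

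With $\hat Q_i\coloneqq(\hat b_i+\hat b_i^\dagger)/\sqrt2$ we have $\hat H(A)=\tfrac12\sum_{ij}A_{ij}\hat Q_i\hat Q_j$. I will diagonalize the real symmetric matrix $A=O^{\T}\Lambda O$ by a real orthogonal $O$ and pass to the collective eigenmodes $\hat c_k\coloneqq\sum_i O_{ki}\hat b_i$. This is a passive change of basis, so it preserves $\Nop_c$ and $\Nop$. It gives
\begin{align}
\hat H(A)=\frac14\sum_k\lambda_k\bigl(\hat c_k^2+\hat c_k^{\dagger2}+2\hat c_k^\dagger\hat c_k+1\bigr),\qquad |\lambda_k|\le1 .
\end{align}
Since $A$ has zero diagonal, $\sum_k\lambda_k=\operatorname{tr}A=0$, so the constant term drops. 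I therefore write $\hat H(A)=\tfrac14(\hat T+\hat T^\dagger+2\hat M)$, where $\hat T\coloneqq\sum_k\lambda_k\hat c_k^{\dagger2}$ and $\hat M\coloneqq\sum_k\lambda_k\hat c_k^\dagger\hat c_k$.

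Next I bound the three pieces in terms of $\Nop_c$. The passive piece is immediate: $\hat M$ commutes with the occupation numbers $\hat n_k=\hat c_k^\dagger\hat c_k$ and satisfies $-\Nop_c\le\hat M\le\Nop_c$. Hence $\norm{\hat M\ket\xi}\le\norm{\Nop_c\ket\xi}$.

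For the pair-creation term I will compute $\hat T^\dagger\hat T=\sum_{k,l}\lambda_k\lambda_l\hat c_k^2\hat c_l^{\dagger2}$ in the Fock basis of the $\hat c$ modes. The diagonal ($k=l$) contributions are $\lambda_k^2(\hat n_k+1)(\hat n_k+2)$. For the off-diagonal contributions I plan a Cauchy--Schwarz bound of the form $\sum_{k\ne l}|\lambda_k\lambda_l|\,\hat c_k^2\hat c_l^{\dagger2}\preceq\sum_{k\ne l}\ldots$, comparing with $\prod$-type terms $(\hat n_k+1)(\hat n_l+1)$. The target is
\begin{align}
\hat T^\dagger\hat T\le(\Nop_c+2)^2+c\,(\Nop_c+4m)
\end{align}
for some small constant $c$. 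As a fallback I will use the commutator identity $\hat T^\dagger\hat T=\hat T\hat T^\dagger+\sum_k\lambda_k^2(4\hat n_k+2)$ together with a direct bound on $\hat T\hat T^\dagger$ from the lowering side. Either route gives $\norm{\hat T\ket\xi},\norm{\hat T^\dagger\ket\xi}\le\norm{(\Nop_c+4m+2)\ket\xi}$. The additive $4m$ comes from the $+2$ in $[\hat c_k^2,\hat c_k^{\dagger2}]$, summed over the $4m$ collective modes. I expect this pair-term bound, with clean constants, to be the main obstacle.

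To combine the pieces, I use the fact that $\hat T$, $\hat T^\dagger$ and $\hat M$ change $\Nop_c$ by $+2$, $-2$ and $0$. For $\ket\xi$ supported in a single $\Nop_c$ eigenspace the three images are therefore mutually orthogonal, so
\begin{align}
\norm{\hat H\ket\xi}^2=\tfrac1{16}\bigl(\norm{\hat T\xi}^2+\norm{\hat T^\dagger\xi}^2+4\norm{\hat M\xi}^2\bigr).
\end{align}
For a general $\ket\xi$ I instead apply the triangle and Cauchy--Schwarz inequalities across the three pieces, $\norm{x+y+z}\le\sqrt3(\norm x^2+\norm y^2+\norm z^2)^{1/2}$. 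Each of the three norms is at most $\norm{(\Nop_c+4m+2)\xi}$ (the $\hat M$-term even with its factor $2$, after absorbing into the $1/4$). The total is then at most $\sqrt3\,\norm{(\Nop_c+4m+2)\xi}$, and this is bounded by $\sqrt3\,\norm{(\Nop+4m+2)\xi}$ because $\Nop_c\le\Nop$ and the two operators commute.

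The $p$-shear case is identical after replacing $\hat Q_i$ by $(\hat b_i-\hat b_i^\dagger)/(i\sqrt2)$. This only flips the signs of the $\hat T,\hat T^\dagger$ pieces.
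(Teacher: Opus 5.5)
You have the right skeleton: collective modes, diagonalizing $A$, and the split $\hat H(A)=\tfrac14(\hat T+\hat T^\dagger+2\hat M)$ into pieces that shift $\Nop_c$ by $+2,-2,0$. But there is a genuine gap. The step you flag as the main obstacle, the pair-creation bound, carries essentially all the content of the lemma, and neither of your routes establishes it.

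Your primary target $\hat T^\dagger\hat T\le(\Nop_c+2)^2+c(\Nop_c+4m)$ is false. Take $\lambda_k=\pm1$, as for $A=R_U$, and even $n$. The norm of $\hat T$ on the sector $\Nop_c=n$ is then $\sqrt{(n+2)(n+4m)}$, attained on $\hat T^{n/2}\ket0$ (a standard $\mathrm{su}(1,1)$ computation). The cross term $4mn$ cannot be absorbed with constant $c$: for $n\approx m$ the two sides are about $5m^2$ and $m^2$. So $m$ does not enter only additively through $[\hat c_k^2,\hat c_k^{\dagger2}]=4\hat n_k+2$.

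The fallback only moves the cross terms into $\hat T\hat T^\dagger$. Bounding $\norm{\sum_l\lambda_l\hat c_l^2\varphi}$ term by term gives order $\sqrt m\,n$, which is too weak.

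The missing idea is to put one ladder operator on each side before applying Cauchy--Schwarz over the mode index. For normalized $\varphi$ in $\Nop_c=n$ and $\chi$ in $\Nop_c=n+2$,
\[
\abs{\bra\chi\hat T\ket\varphi}\le\sum_k\abs{\lambda_k}\,\norm{\hat c_k\ket\chi}\,\norm{\hat c_k^\dagger\ket\varphi}\le\bigl(\bra\chi\Nop_c\ket\chi\bigr)^{1/2}\bigl(\bra\varphi\Nop_c+4m\ket\varphi\bigr)^{1/2}=\sqrt{(n+2)(n+4m)}.
\]
Hence $\norm{\hat T\varphi}\le n+2m+1$, and likewise $\norm{\hat T^\dagger\varphi}\le n+2m-1$. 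This is exactly the paper's bilinear estimate. The paper applies it to the whole form $\sum_{ij}A_{ij}\hat q_i^{\mathrm{col}}\hat q_j^{\mathrm{col}}$ using $\norm A\le1$ directly, with no diagonalization. In your operator language, write $\hat c_k^2\hat c_l^{\dagger2}=(\hat c_l\hat c_k^\dagger)^\dagger(\hat c_k\hat c_l^\dagger)$ for $k\ne l$ before applying operator Cauchy--Schwarz. This gives $\hat T^\dagger\hat T\le(\Nop_c+2)(\Nop_c+4m)$, not your target.

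The combination step also needs repair, because as written it does not give $\sqrt3$. One reading uses $2\norm{\hat M\xi}\le\norm{(\Nop_c+4m+2)\xi}$, which fails once $\Nop_c>4m+2$. The other bounds each of three pieces by $X=\norm{(\Nop_c+4m+2)\xi}$, in which case your $\sqrt3$-inequality yields $3X$.

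To fix this, keep the prefactors. Because $\hat T$ and $\hat T^\dagger$ map distinct $\Nop_c$-sectors to mutually orthogonal sectors, the sector bounds extend to $\norm{\hat T\xi},\norm{\hat T^\dagger\xi}\le\norm{(\Nop_c+2m+1)\xi}$.

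You also need not assume $\operatorname{tr}A=0$, which the lemma does not assume. Keep the constant inside the passive part $\tfrac12\sum_k\lambda_k(\hat n_k+\tfrac12)$, whose modulus is at most $\tfrac12(\Nop_c+2m)$. The triangle inequality then gives
\[
\norm{\hat H(A)\xi}\le\norm{(\Nop_c+2m+1)\xi}\le\norm{(\Nop+4m+2)\xi}.
\]
Completed this way, your route is slightly sharper than the paper's. Sorting by the shift in $\Nop_c$ avoids the factor $\sqrt3$, which the paper incurs when three total-number sectors feed one output sector.
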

\begin{proof}
The shear acts through the normalized collective quadratures
\begin{align}
\hat q_i^{\mathrm{col}}\coloneqq\frac1{\sqrt K}\sum_{\mu=1}^K\hat q_{i\mu},~~
\hat p_i^{\mathrm{col}}\coloneqq\frac1{\sqrt K}\sum_{\mu=1}^K\hat p_{i\mu},
~~ 1\le i\le4m.
\label{eq:collectivequadratures}
\end{align}
Let $\hat\Pi_{\le k}$ project onto total photon number at most $k$. For any state $\ket\varphi=\hat\Pi_{\le k}\ket\varphi$,
\begin{align}
\sum_{i=1}^{4m}\norm{\hat q_i^{\mathrm{col}}\ket\varphi}^2
=\bra\varphi\sum_{i=1}^{4m}(\hat q_i^{\mathrm{col}})^2\ket\varphi
\le(2k+4m)\norm{\ket\varphi}^2.
\label{eq:qcolumncut}
\end{align}
Here, $\hat N_{\rm col}\coloneqq\tfrac12\sum_{i=1}^{4m}[(\hat q_i^{\mathrm{col}})^2+(\hat p_i^{\mathrm{col}})^2-1]$ counts photons in the $4m$ collective modes. The remaining mode combinations contribute a nonnegative photon number, so $\hat N_{\rm col}\le\Nop$. Therefore, $\sum_i(\hat q_i^{\mathrm{col}})^2\le2\hat N_{\rm col}+4m\le2\Nop+4m$, which gives Eq.~\eqref{eq:qcolumncut} on $\im\hat\Pi_{\le k}$.

To bound the quadratic Hamiltonian, take normalized $\ket\varphi$ and $\ket\chi$ with total number at most $k$ and $k+2$, respectively. Cauchy--Schwarz over the mode indices gives
\begin{align}
&\abs{\bra\chi\hat H(A)\ket\varphi}
=\frac12\abs{\sum_{i,j=1}^{4m}A_{ij}\bra\chi\hat q_i^{\mathrm{col}}\hat q_j^{\mathrm{col}}\ket\varphi}\\
&\le\frac{\norm A}{2}
\left(\sum_{i=1}^{4m}\norm{\hat q_i^{\mathrm{col}}\ket\chi}^2\right)^{1/2}
\left(\sum_{j=1}^{4m}\norm{\hat q_j^{\mathrm{col}}\ket\varphi}^2\right)^{1/2}.
\end{align}
The Hamiltonian changes total photon number by at most two; hence, $\hat H(A)\ket\varphi$ lies in $\im\hat\Pi_{\le k+2}$. Taking the supremum over these two normalized states and applying Eq.~\eqref{eq:qcolumncut} at $k$ and $k+2$ gives
\begin{align}
\norm{\hat H(A)\hat\Pi_{\le k}}
&\le\frac12\sqrt{(2k+4m+4)(2k+4m)}\nonumber\\
&\le k+2m+1\le k+4m+2.
\label{eq:Hcut}
\end{align}
Now decompose $\ket\xi=\sum_{k=0}^\infty\ket{\xi_k}$ into sectors of fixed total photon number. A quadratic shear changes number only by $0,+2,-2$. At most three input sectors contribute to any given output sector. Cauchy--Schwarz in that sum gives
\begin{align}
\norm{\hat H(A)\ket\xi}^2
&\le3\sum_{k=0}^\infty\norm{\hat H(A)\ket{\xi_k}}^2 \\
&\le3\sum_{k=0}^\infty(k+4m+2)^2\norm{\ket{\xi_k}}^2.
\end{align}
This is Eq.~\eqref{eq:Hrelative}.
\end{proof}

\subsection{Dilution suppresses occupations above the cutoff}
The splitting isometry $\hat V_K$ attaches $K-1$ vacuum modes to each original mode and applies the splitting network:
$\hat V_K\ket\varphi\coloneqq\hat{\mathcal D}_K(\ket\varphi\otimes\ket0_{\rm aux})$.
Within each block, only its collective mode can be occupied; the orthogonal mode combinations remain in vacuum. On these split states, each physical submode annihilation operator contributes a factor $K^{-1/2}$, as shown below. The $r$th factorial moment, which involves $r$ annihilations and $r$ creations in the same submode, is therefore reduced by $K^{-r}$.

Let $\hat n_{i\mu}\coloneqq\hat a_{i\mu}^\dagger\hat a_{i\mu}$ be the number operator of submode $(i,\mu)$ and $\hat n_i\coloneqq\hat a_i^\dagger\hat a_i$ that of the original mode. We write
$\fall{x}{r}\coloneqq x(x-1)\cdots(x-r+1)$. For an occupation number $x$, this falling factorial counts ordered choices of $r$ photons and vanishes when $x<r$.

\begin{lemma}[Dilution identity]\label{lem:factorial}
For every integer $r\ge1$,
\begin{align}
\hat V_K^\dagger\fall{\hat n_{i\mu}}r\hat V_K
=K^{-r}\fall{\hat n_i}r.
\label{eq:factorial}
\end{align}
More generally, if $f(\Nop)$ is a function of total photon number for which the expressions are defined, then
\begin{align}
\hat V_K^\dagger f(\Nop)\fall{\hat n_{i\mu}}r\hat V_K
=K^{-r}f(\Nop)\fall{\hat n_i}r.
\label{eq:weightedfactorial}
\end{align}
\end{lemma}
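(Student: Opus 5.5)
The plan is to work entirely with normal-ordered expressions. We use $\fall{\hat n_{i\mu}}r=(\hat a_{i\mu}^\dagger)^r\hat a_{i\mu}^r$ and $\fall{\hat n_i}r=(\hat a_i^\dagger)^r\hat a_i^r$, and compute how $\hat a_{i\mu}$ acts on the image of $\hat V_K$. Since $\hat{\mathcal D}_K$ is a passive linear-optical unitary, it transforms annihilation operators linearly. Fix the splitter in block $i$ by the unitary $K\times K$ matrix $S$ whose first column is $(1,\ldots,1)/\sqrt K$, so that $\hat{\mathcal D}_K\hat a_{i\nu}^\dagger\hat{\mathcal D}_K^\dagger=\sum_\mu S_{\mu\nu}\hat a_{i\mu}^\dagger$. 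Then $\hat{\mathcal D}_K^\dagger\hat a_{i\mu}\hat{\mathcal D}_K=\sum_\nu \overline{S_{\mu\nu}}\,\hat a_{i\nu}$. The conjugation direction must be checked against the stated convention, but only the first column will matter. The coefficient of the original mode $\hat a_{i1}$ is $\overline{S_{\mu1}}=K^{-1/2}$.

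The key step is to evaluate
\begin{align}
\hat V_K^\dagger(\hat a_{i\mu}^\dagger)^r\hat a_{i\mu}^r\hat V_K=(\ket0_{\rm aux})^\dagger\,\hat{\mathcal D}_K^\dagger(\hat a_{i\mu}^\dagger)^r\hat a_{i\mu}^r\hat{\mathcal D}_K\,\ket0_{\rm aux}.
\end{align}
Write $\hat b\coloneqq\hat{\mathcal D}_K^\dagger\hat a_{i\mu}\hat{\mathcal D}_K=K^{-1/2}\hat a_{i1}+\sum_{\nu\ge2}\overline{S_{\mu\nu}}\hat a_{i\nu}$. The middle operator becomes $(\hat b^\dagger)^r\hat b^r$. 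The auxiliary part of $\hat b$ consists of annihilators on modes $\nu\ge2$, which kill the auxiliary vacuum. These modes commute with $\hat a_{i1}$ and with everything else. Expanding $\hat b^r$ binomially, every term containing an auxiliary annihilator vanishes on $\ket0_{\rm aux}$. Likewise, on the bra side every term of $(\hat b^\dagger)^r$ containing an auxiliary creator vanishes. Hence only $K^{-r}(\hat a_{i1}^\dagger)^r\hat a_{i1}^r$ survives, and after identifying $\hat a_{i1}$ with the original $\hat a_i$ this is $K^{-r}\fall{\hat n_i}r$, which is Eq.~\eqref{eq:factorial}.

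For the weighted version Eq.~\eqref{eq:weightedfactorial}, we use that $\hat{\mathcal D}_K$ is passive, so it commutes with the total number operator $\Nop$ and hence with $f(\Nop)$. Moreover, $\Nop$ on the submodes restricted to states of the form $\ket\varphi\otimes\ket0_{\rm aux}$ equals the original-mode $\Nop$ tensored with the identity. Because $\fall{\hat n_{i\mu}}r$ preserves total number, $f(\Nop)$ acts on the output of the computation above by the scalar $f(N)$ on each fixed-number sector. We therefore decompose into number sectors: $f(\Nop)$ commutes with $\fall{\hat n_{i\mu}}r$ and with $\hat V_K$ in the sense $f(\Nop)\hat V_K=\hat V_K f(\Nop)$. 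Pulling $f(\Nop)$ through then gives Eq.~\eqref{eq:weightedfactorial}.

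The main obstacle is bookkeeping rather than substance. One point is getting the conjugation direction for $\hat{\mathcal D}_K$ right. Another is making sure that only the first column of $S$ enters, since the remaining columns are arbitrary. A third is handling domains for unbounded $f$; for that we state the identity on finite-photon-number vectors and extend by sector decomposition.
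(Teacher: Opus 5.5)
Your proposal is correct and follows essentially the paper's argument. The paper splits $\hat a_{i\mu}$ into $K^{-1/2}$ times the collective mode plus orthogonal modes that annihilate the image of $\hat V_K$, giving $\hat a_{i\mu}\hat V_K=K^{-1/2}\hat V_K\hat a_i$; this is your statement that $\hat b^r$ acts as $K^{-r/2}\hat a_{i1}^r$ on $\ket\varphi\otimes\ket0_{\rm aux}$, and the paper treats $f(\Nop)$ by number conservation exactly as you do. One small correction: with your stated convention one gets $\hat{\mathcal D}_K^\dagger\hat a_{i\mu}\hat{\mathcal D}_K=\sum_\nu S_{\mu\nu}\hat a_{i\nu}$ without the complex conjugate, but as you anticipated this does not matter, since only the real entry $S_{\mu1}=K^{-1/2}$ survives.
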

\begin{proof}
For each original mode $i$, define the collective output annihilation operator
\begin{align}
\hat a_i^{\mathrm{col}}
\coloneqq\frac1{\sqrt K}\sum_{\mu=1}^K\hat a_{i\mu}.
\label{eq:collectiveannihilator}
\end{align}
Complete it to an orthonormal mode basis with annihilation operators
$\hat a_{i\alpha}^{\perp}$, $\alpha=2,\ldots,K$. In this basis, each physical submode operator decomposes as
\begin{align}
\hat a_{i\mu}
=\frac1{\sqrt K}\hat a_i^{\mathrm{col}}
+\sum_{\alpha=2}^K c_{\mu\alpha}\hat a_{i\alpha}^{\perp}.
\label{eq:submodedecomposition}
\end{align}
The vacuum condition holds for the mode combinations orthogonal to the collective mode, rather than for the individual physical submodes. Indeed, in the collective--orthogonal basis, the image of $\hat V_K$ has the form
\begin{align}
\hat V_K\ket\varphi
=\ket\varphi_{\mathrm{col}}\otimes\ket0_{\perp}.
\label{eq:imageVK}
\end{align}
It follows that
\begin{align}
\hat a_i^{\mathrm{col}}\hat V_K
=\hat V_K\hat a_i,
\qquad
\hat a_{i\alpha}^{\perp}\hat V_K=0.
\label{eq:collectiveintertwining}
\end{align}
Substituting Eq.~\eqref{eq:collectiveintertwining} into Eq.~\eqref{eq:submodedecomposition} removes every orthogonal-mode term and gives
\begin{align}
\hat a_{i\mu}\hat V_K
=\frac1{\sqrt K}\hat V_K\hat a_i.
\label{eq:submodeintertwining}
\end{align}
Applying this relation $r$ times yields
\begin{align}
\hat a_{i\mu}^r\hat V_K
=K^{-r/2}\hat V_K\hat a_i^r.
\label{eq:repeatedintertwining}
\end{align}
Hence,
\begin{align}
\hat V_K^\dagger
(\hat a_{i\mu}^\dagger)^r\hat a_{i\mu}^r
\hat V_K
&=K^{-r}(\hat a_i^\dagger)^r
\hat V_K^\dagger\hat V_K\hat a_i^r\nonumber\\
&=K^{-r}(\hat a_i^\dagger)^r\hat a_i^r,
\end{align}
which is Eq.~\eqref{eq:factorial}, because
$(\hat a^\dagger)^r\hat a^r=\fall{\hat n}r$.
Finally, $\hat V_K$ preserves total photon number, and total photon number commutes with every occupation operator. This gives Eq.~\eqref{eq:weightedfactorial}.
\end{proof}

We now convert the factorial-moment identity into a cutoff bound. If a submode lies outside the retained local space, its occupation is at least $s+1$. For every nonnegative integer $x$,
\begin{align}
{\bf1}_{x\ge s+1}
\le\frac{\fall{x}{s+1}}{(s+1)!},
\label{eq:scalarcutoff}
\end{align}
because both sides vanish for $x\le s$, while
$\fall{x}{s+1}\ge(s+1)!$ for $x\ge s+1$.
Taking the union over all $4mK$ submodes therefore gives the diagonal operator inequality
\begin{align}
\hat P^\perp
\le\sum_{i=1}^{4m}\sum_{\mu=1}^K
{\bf1}_{\hat n_{i\mu}\ge s+1}
\le\frac1{(s+1)!}
\sum_{i=1}^{4m}\sum_{\mu=1}^K
\fall{\hat n_{i\mu}}{s+1}.
\label{eq:union}
\end{align}
Lemma~\ref{lem:factorial} contributes $K^{-(s+1)}$ for each term, whereas the sum over $\mu$ contributes a factor $K$. This is the origin of the net factor $K^{-s}$ below.

Let $\Nop$ denote total photon number, either on the $4mK$ submodes or, after conjugation by $\hat V_K$, on the original $4m$ modes.

\begin{lemma}[Weighted leakage]\label{lem:leakage}
For $c\ge1$ and any state $\ket\varphi$ on the original $4m$ modes in the domain of the displayed moment,
\begin{align}
\norm{(\Nop+c)\hat P^\perp\hat V_K\ket\varphi}
\le\frac{\norm{(\Nop+c)^{(s+3)/2}\ket\varphi}}
{\sqrt{(s+1)!}\,K^{s/2}}.
\label{eq:leakage}
\end{align}
\end{lemma}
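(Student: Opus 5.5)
The plan is to square both sides and reduce everything to expectation values of functions that are diagonal in an occupation-number basis. Such inequalities then hold because they hold pointwise on basis labels, which also disposes of the unbounded-operator and domain issues. Put $\ket\Phi\coloneqq\hat V_K\ket\varphi$. Both $\Nop$ and $\hat P^\perp$ are diagonal in the submode Fock basis, so they commute, and $\hat P^\perp$ is a projector. Hence
\begin{align}
\norm{(\Nop+c)\hat P^\perp\ket\Phi}^2=\bra\Phi(\Nop+c)^2\hat P^\perp\ket\Phi .
\nonumber
\end{align}
Expanding $\ket\Phi$ in the submode Fock basis writes this as a sum of $|\text{amplitude}|^2$ times $(N+c)^2\,{\bf1}[\text{some } n_{i\mu}\ge s+1]$.

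The first step bounds this weight. The scalar inequality behind Eq.~\eqref{eq:union}, multiplied by the nonnegative factor $(N+c)^2$, holds pointwise on every occupation label. It therefore gives
\begin{align}
\bra\Phi(\Nop+c)^2\hat P^\perp\ket\Phi\le\frac1{(s+1)!}\sum_{i=1}^{4m}\sum_{\mu=1}^K\bra\Phi(\Nop+c)^2\fall{\hat n_{i\mu}}{s+1}\ket\Phi .
\nonumber
\end{align}
The second step applies the weighted dilution identity, Eq.~\eqref{eq:weightedfactorial} of Lemma~\ref{lem:factorial}, with $f(\Nop)=(\Nop+c)^2$ and $r=s+1$. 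Each term becomes $K^{-(s+1)}\bra\varphi(\Nop+c)^2\fall{\hat n_i}{s+1}\ket\varphi$. The sum over $\mu$ contributes a factor $K$, which gives a prefactor $K^{-s}/(s+1)!$ in front of $\sum_{i}\bra\varphi(\Nop+c)^2\fall{\hat n_i}{s+1}\ket\varphi$.

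The third step bounds this sum in the Fock basis of the original $4m$ modes, where everything is again diagonal. Pointwise we have $\fall{n_i}{s+1}\le n_i^{s+1}$, and $\sum_i n_i^{s+1}\le(\sum_i n_i)^{s+1}=N^{s+1}\le(N+c)^{s+1}$. Thus the sum is at most $\bra\varphi(\Nop+c)^{s+3}\ket\varphi=\norm{(\Nop+c)^{(s+3)/2}\ket\varphi}^2$. Taking square roots gives Eq.~\eqref{eq:leakage}.

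I expect the only delicate point to be rigor rather than computation. The operators are unbounded, and Lemma~\ref{lem:factorial} is an identity between such operators. I would handle this by first taking $\ket\varphi$ with finitely many photons, where every expression is a finite sum and every step is exact. The general case then follows by truncating $\ket\varphi$ to total photon number at most $k$, using that $\hat V_K$ preserves photon number, and letting $k\to\infty$ by monotone convergence. This is legitimate because both sides are sums of nonnegative terms over the photon-number sectors, and the right-hand side is finite by the domain hypothesis.
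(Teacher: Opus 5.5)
Your proposal is correct and follows the paper's proof essentially step for step. Like the paper, you square using that $\hat P^\perp$ is a projector commuting with $\Nop$, apply the factorial union bound Eq.~\eqref{eq:union} and the weighted dilution identity with $r=s+1$, and then bound $\sum_i\fall{\hat n_i}{s+1}\le\Nop^{s+1}\le(\Nop+c)^{s+1}$ on occupation basis vectors. Your sector-by-sector handling of the domain issue plays the same role as the paper's closing remark in Appendix~\ref{app:error}, which establishes the estimates on finite-photon-number cores and extends them by limits.
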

\begin{proof}
Since $\hat P^\perp$ is a projector and commutes with $\Nop$, squaring the left-hand side gives
\begin{align}
\norm{(\Nop+c)\hat P^\perp\hat V_K\ket\varphi}^2
=\bra\varphi\hat V_K^\dagger
(\Nop+c)^2\hat P^\perp\hat V_K\ket\varphi.
\end{align}
Apply Eq.~\eqref{eq:union}, followed by the weighted identity
Eq.~\eqref{eq:weightedfactorial} with $r=s+1$, to obtain
\begin{align}
\norm{(\Nop+c)\hat P^\perp\hat V_K\ket\varphi}^2
\le\frac{\bra\varphi(\Nop+c)^2
\sum_{i=1}^{4m}\fall{\hat n_i}{s+1}\ket\varphi}
{(s+1)!K^s}.
\label{eq:weightedleakageintermediate}
\end{align}
On every occupation basis vector,
\begin{align}
\sum_{i=1}^{4m}\fall{\hat n_i}{s+1}
\le\Nop^{s+1}
\le(\Nop+c)^{s+1}.
\end{align}
Substituting this bound into Eq.~\eqref{eq:weightedleakageintermediate} and taking a square root gives Eq.~\eqref{eq:leakage}.
\end{proof}

\subsection{Photon number moments during the six shears}
We first bound the matrices describing the action of the shears on the quadratures, then use this estimate to control the photon number moments in Lemma~\ref{lem:leakage}.
For a unitary $\hat{\mathcal U}$ in the ideal evolution, describe its action on the $8m$ quadratures by the real matrix defined through $S(\hat{\mathcal U})(\hat q,\hat p)^{\T}\coloneqq \hat{\mathcal U}^\dagger(\hat q,\hat p)^{\T}\hat{\mathcal U}$. The shear conjugation rules from Step 1 give
\begin{align}
S_q(A)=\begin{pmatrix}I&0\\-A&I\end{pmatrix},\qquad
S_p(B)=\begin{pmatrix}I&B\\0&I\end{pmatrix}.
\label{eq:shearmatrices}
\end{align}
\begin{lemma}[Uniform bound throughout the ideal evolution]\label{lem:amplification}
At every intermediate time in the ideal six-shear evolution, the matrix describing its action on the quadratures has operator norm at most two.
\end{lemma}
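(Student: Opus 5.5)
The plan is to reduce the claim to $2\times2$ matrices by diagonalizing the shear coefficients, and to use that each group of three shears composes to an orthogonal matrix. Let $t=\ell-1+\tau$ with $0\le\tau\le1$. The matrix $S(\hat G(t))$ is then a product of three factors. The first is the matrices of the completed shears of the current group of three (encoder group for $\ell\le3$, central group for $\ell\ge4$). The second is the partial shear $S_q(\tau A)$ or $S_p(\tau B)$ from Eq.~\eqref{eq:shearmatrices}. The third, present only for $\ell\ge4$, is the matrix of the completed encoder $C\oplus I_{2m}$.

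By Proposition~\ref{prop:six} and Appendix~\ref{app:shears}, the three encoder shears compose exactly to the passive unitary $C\oplus I_{2m}$, and the central three compose to $\hat V_U$. Passive unitaries act on $(\hat q,\hat p)$ by real orthogonal symplectic matrices, so the completed encoder factor has norm one. Submultiplicativity therefore reduces the claim to bounding, within a single group, the three partial products
\begin{align}
S_q(\tau A),\quad S_p(\tau B)S_q(A),\quad S_q(\tau A)S_p(B)S_q(A),\nonumber
\end{align}
or their reversed orders, depending on the conjugation convention. Reversing the order changes nothing below, since every bound is stated for both orders or follows by transposition.

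Within each group, all coefficient matrices are functions of one real symmetric matrix. For the central group this is $R_U$, with eigenvalues $\pm1$. For the encoder group it is $\begin{pmatrix}0&I\\I&0\end{pmatrix}\oplus0$, with eigenvalues $\pm1,0$. Choose an orthogonal $W$ diagonalizing this matrix. Conjugating by $W\oplus W$ preserves operator norms and turns each partial product into a direct sum of $2\times2$ blocks, one for each eigenvalue $\lambda$. Each block involves the scalar shears $a=\tan(\theta/2)$ and $b=\sin\theta$ of Eq.~\eqref{eq:rotationunitary}, with $\theta=\pm\pi/2$ for $R_U$ and $\theta=\pm\pi/4$ or $0$ for the encoder. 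In every case $|a|,|b|\le1$. It then suffices to bound $2\times2$ blocks, using the Frobenius norm or the explicit singular values.

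The three stages are as follows. First, the block $\begin{pmatrix}1&0\\-\tau a&1\end{pmatrix}$ has norm $(|\tau a|+\sqrt{\tau^2a^2+4})/2\le(1+\sqrt5)/2<2$. Second, the block $\begin{pmatrix}1&\tau b\\0&1\end{pmatrix}\begin{pmatrix}1&0\\-a&1\end{pmatrix}=\begin{pmatrix}1-\tau ab&\tau b\\-a&1\end{pmatrix}$ has Frobenius norm squared at most $(1-\tau ab)^2+\tau^2b^2+a^2+1\le4$. This uses $0\le ab\le1$, which holds since $a$ and $b$ share the sign of $\theta$. Third, because the full three-shear product is the rotation block $\mathcal R_\theta$, the third partial product equals $S_q((\tau-1)a)\mathcal R_\theta$. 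Its norm equals that of a lower-triangular shear with entry of modulus at most one, again at most $(1+\sqrt5)/2$.

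The main obstacle is the bookkeeping rather than any estimate. One must confirm that the completed encoder is exactly passive with orthogonal quadrature matrix, including that no residual squeezing survives the encoder's $\pm\pi/4$ structure. One must also handle the ordering convention of $S(\cdot)$, which is contravariant under products. Both points follow from the exact identities already proved in Appendix~\ref{app:shears}.
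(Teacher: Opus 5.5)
Your argument is correct, but it takes a different route from the paper's. Within each group of three shears, both coefficient matrices are multiples of one real symmetric matrix, so you diagonalize it by $W\oplus W$. This reduces every partial product to explicit $2\times2$ blocks with strengths $a=\tan(\theta/2)$ and $b=\sin\theta$. You then bound the outer stages by exact singular values, and the middle stage by a Frobenius estimate that needs $|a|,|b|\le1$ and $ab\ge0$.

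The paper never diagonalizes. It bounds $\norm{S_q(tA)}\le1+t\norm{A}\le2$ by the triangle inequality, and writes the third stage as $S_q((t-1)A)S_{\rm fin}$ with $S_{\rm fin}$ orthogonal. For the middle stage it uses the identity $S_p(tB)S_q(A)=(1-t)S_q(A)+tS_q(-A)S_{\rm fin}$, which holds because $S_p(tB)$ is affine in $t$. The middle stage is therefore a convex combination of two matrices of norm at most two.

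The paper's version needs only $\norm{A}\le1$ and orthogonality of the completed product. It uses neither $\norm{B}\le1$, nor the commutation of $A$ and $B$, nor the specific strengths. It therefore applies verbatim to any symmetric factorization $S_q(A)S_p(B)S_q(A)$ of an orthogonal map with $\norm{A}\le1$.

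Your version uses more structure, but in exchange it is fully explicit. It gives the sharper constant $(1+\sqrt5)/2$ for the outer stages. Your Frobenius expression $(1-\tau ab)^2+\tau^2b^2+a^2+1$ is in fact at most $3$ under your hypotheses, giving $\sqrt3$ for the middle stage. It also checks orthogonality of the completed group directly, since $S_q(a)S_p(b)S_q(a)$ is the rotation by $\theta$ when $b=2a/(1+a^2)$.

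One small correction: with the paper's convention $S(\hat{\mathcal U})\hat x=\hat{\mathcal U}^\dagger\hat x\hat{\mathcal U}$, the map is multiplicative, $S(\hat U_2\hat U_1)=S(\hat U_2)S(\hat U_1)$, not contravariant. The partial products therefore appear exactly in the order you list first, so hedging over both orders is harmless but unnecessary.
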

\begin{proof}
Consider either of the two three-shear factorizations, and let
\begin{align}
S_{\rm fin}\coloneqq S_q(A)S_p(B)S_q(A)
\end{align}
denote its completed transformation, which is orthogonal. Since $\norm A\le1$, the triangle inequality gives $\norm{S_q(tA)}\le1+t\norm A\le2$ for $0\le t\le1$.

While the first, second, and third shears are being applied, respectively, the intermediate transformations are
\begin{align}
&S_q(tA),\\
&S_p(tB)S_q(A)
=(1-t)S_q(A)+tS_q(-A)S_{\rm fin},\\
&S_q(tA)S_p(B)S_q(A)
=S_q((t-1)A)S_{\rm fin},
\qquad 0\le t\le1.
\end{align}
The first expression has norm at most two directly. The third has norm at most two because $S_{\rm fin}$ is orthogonal and $|t-1|\le1$. For the second, the displayed identity writes it as a convex combination of $S_q(A)$ and $S_q(-A)S_{\rm fin}$, both of which have norm at most two. Hence, all three intermediate transformations have norm at most two. Any transformation completed before the current three-shear factorization is passive and therefore orthogonal, so multiplying by it does not change these norms. The bound thus holds throughout all six shears.
\end{proof}

To obtain the photon number moment bound, fix any intermediate time and let $\hat G\coloneqq\hat G(t)$ be the corresponding ideal evolution on the original $4m$ modes, with $S$ the matrix defined above. Put
\begin{align}
\hat C\coloneqq\Nop+4m+2,\qquad \hat C_G\coloneqq\hat G^\dagger \hat C\hat G.
\end{align}
Here, $\hat C_G$ is a positive operator. With $\hat x\coloneqq(\hat q,\hat p)^{\T}$,
\begin{align}
\hat C_G=\frac12\hat x^{\T}S^{\T}S\hat x+2m+2.
\label{eq:conjugatedN}
\end{align}

\begin{lemma}[Uniform moment estimate]\label{lem:moment}
Let $r\coloneqq s+3$. For every intermediate ideal evolution $\hat G=\hat G(t)$,
\begin{align}
\norm{\hat C^{r/2}\hat G \hat\Pi_{\le n}}
\le\left[4(n+4m+2s+8)\right]^{r/2}.
\label{eq:moment}
\end{align}
\end{lemma}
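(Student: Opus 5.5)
The plan is to move the evolution onto the weight and then count photon-number sectors. Since $\hat G$ is unitary, for every $\ket\varphi$ we have $\norm{\hat C^{r/2}\hat G\hat\Pi_{\le n}\ket\varphi}=\norm{\hat G^\dagger\hat C^{r/2}\hat G\hat\Pi_{\le n}\ket\varphi}=\norm{\hat C_G^{r/2}\hat\Pi_{\le n}\ket\varphi}$. It therefore suffices to bound $\hat C_G^{r/2}$ on $\im\hat\Pi_{\le n}$, using the explicit quadratic form in Eq.~\eqref{eq:conjugatedN}.

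An operator inequality $\hat C_G\le 4\hat C$ would follow directly from Lemma~\ref{lem:amplification}. However, it cannot be raised to the power $r/2>1$, because $x\mapsto x^{r/2}$ is not operator monotone. I will instead use the sector structure, as in the proof of Lemma~\ref{lem:Hrelative}.

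\emph{Step 1: one-step sector bound.} $\hat C_G$ is quadratic in $\hat x$, so it maps $\im\hat\Pi_{\le j}$ into $\im\hat\Pi_{\le j+2}$. For normalized $\ket\phi\in\im\hat\Pi_{\le j}$ and $\ket\chi\in\im\hat\Pi_{\le j+2}$, I will write
\[
\bra\chi\hat x^{\T}S^{\T}S\hat x\ket\phi=\sum_{a,b}(S^{\T}S)_{ab}\,\langle\hat x_a\chi|\hat x_b\phi\rangle .
\]
Cauchy--Schwarz over the indices, together with $\norm{S^{\T}S}\le4$ from Lemma~\ref{lem:amplification}, bounds this by $4\bigl(\sum_a\norm{\hat x_a\chi}^2\bigr)^{1/2}\bigl(\sum_b\norm{\hat x_b\phi}^2\bigr)^{1/2}$. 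The identity $\sum_a\hat x_a^2=2\Nop+4m$ then gives
\[
\abs{\bra\chi\hat C_G\ket\phi}\le 2(2j+4m+4)+2m+2\le 4(j+4m+4).
\]
Taking the supremum yields $\norm{\hat C_G\hat\Pi_{\le j}}\le4(j+4m+4)$.

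\emph{Step 2: iteration.} Applying Step 1 successively on $\im\hat\Pi_{\le n},\im\hat\Pi_{\le n+2},\ldots$ gives, for integer $k\ge0$,
\[
\norm{\hat C_G^{k}\hat\Pi_{\le n}}\le\prod_{l=0}^{k-1}4(n+2l+4m+4)\le[4(n+2k+4m+2)]^{k}.
\]
To handle the possibly half-integer power $r/2=(s+3)/2$, I set $a=\lfloor r/2\rfloor$ and $b=\lceil r/2\rceil$, so that $a+b=r$. By positivity of $\hat C_G$,
\[
\norm{\hat C_G^{r/2}\psi}^2=\bra\psi\hat C_G^{r}\ket\psi=\langle\hat C_G^{a}\psi|\hat C_G^{b}\psi\rangle .
\]
Bounding both factors by the iteration with $k\le b$ gives $\norm{\hat C_G^{r/2}\hat\Pi_{\le n}}\le[4(n+2b+4m+2)]^{r/2}$. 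Since $2b\le s+4$, we have $n+2b+4m+2\le n+4m+2s+8$, which gives Eq.~\eqref{eq:moment}. All vectors involved have bounded photon number, so there are no domain issues.

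\emph{Main obstacle.} The delicate point is Step 1. The constants must come out so that the sector shift of $+2$ per application and the amplification factor $4$ fit inside $4(n+4m+2s+8)$. This requires using the full-space identity $\sum_a\hat x_a^2=2\Nop+4m$ rather than the collective-mode bound. If the arithmetic turns out tighter than sketched, I would absorb the slack using the generous additive constant $2s+8$.
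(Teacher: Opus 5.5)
Your proof is correct and follows essentially the paper's argument: the same one-step sector bound on $\hat C_G\hat\Pi_{\le j}$, obtained from Cauchy--Schwarz with $\norm{S^{\T}S}\le4$ and the full-quadrature identity $\sum_a\hat x_a^2=2\Nop+4m$, iterated across photon-number sectors. The only difference is cosmetic: you handle the half-integer power by splitting $r=a+b$, whereas the paper writes $\norm{\hat C^{r/2}\hat G\hat\Pi_{\le n}}^2=\norm{\hat\Pi_{\le n}\hat C_G^{r}\hat\Pi_{\le n}}$ and iterates the full $r$ steps; both versions stay within $n+4m+2s+8$.
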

\begin{proof}
For $\ket\varphi=\hat\Pi_{\le k}\ket\varphi$, the full set of quadratures satisfies $\sum_{j=1}^{8m}\norm{\hat x_j\ket\varphi}^2\le(2k+4m)\norm{\ket\varphi}^2$. Since $\norm{S^{\T}S}\le4$, the same Cauchy--Schwarz estimate used for Eq.~\eqref{eq:Hcut}, now applied to Eq.~\eqref{eq:conjugatedN}, gives
\begin{align}
\norm{\hat C_G\hat\Pi_{\le k}}
&\le2\sqrt{(2k+4m+4)(2k+4m)}+2m+2
\nonumber\\&\le4(k+4m+2).
\label{eq:CGbound}
\end{align}

The quadratic operator $\hat C_G$ changes photon number by at most two. For any integer $r\ge1$, iterating Eq.~\eqref{eq:CGbound} therefore gives
\begin{align}
\norm{\hat C_G^r\hat\Pi_{\le n}}
\le4^r\prod_{j=0}^{r-1}(n+4m+2j+2).
\label{eq:momentiteration}
\end{align}
Moreover,
\begin{align}
\norm{\hat C^{r/2}\hat G\hat\Pi_{\le n}}^2
=\norm{\hat\Pi_{\le n}\hat G^\dagger \hat C^r\hat G\hat\Pi_{\le n}}
=\norm{\hat\Pi_{\le n}\hat C_G^r\hat\Pi_{\le n}}.
\end{align}
For $r=s+3$, every factor in Eq.~\eqref{eq:momentiteration} is at most $n+4m+2s+8$. This proves Eq.~\eqref{eq:moment}. The identity above also covers odd $r$.
\end{proof}
Finite Fock vectors acted on by a bounded Gaussian transformation on finitely many modes have the required moments. The estimates can first be established on cores with finite photon number and then extended by norm limits, justifying the derivatives and integrals in the trajectory comparison above.

\section{Circuit implementation and uniformity}\label{sec:uniformity}
This appendix gives the reversible arithmetic, parallelization of diagonal phases, and logspace generation used in Proposition~\ref{prop:shearresource}. The cutoff $s$ is fixed throughout, and constants hidden in $O(\cdot)$ may depend on it.

\subsection{Reversible histogram counting}
For each block $i$ and eigenvalue label $r$, compare all $K$ local labels with $r$ and coherently add the indicator bits. A quantum carry-save tree reduces these bits to two $O(\log K)$-bit numbers in $O(\log K)$ constant-depth rounds~\cite{Gossett}. A final reversible ripple-carry addition preserves the $O(\log K)$ depth~\cite{Cuccaro}, with all blocks and labels processed in parallel. Copying the resulting counts into fresh registers and reversing the intermediate arithmetic~\cite{Bennett1973} gives the clean map
\begin{align}
\ket{\{r_{i\mu}\}}_D\ket0_H\longmapsto
\ket{\{r_{i\mu}\}}_D\ket{\{h_{i,r}\}}_H.
\end{align}
The number of partial words decreases geometrically, while their length grows by at most one bit per round. At round $\ell$, there are $O(K(2/3)^\ell+1)$ words of length $O(\ell+1)$. Their total storage over $O(\log K)$ rounds is $O(K\sum_{\ell\ge0}(\ell+1)(2/3)^\ell+\log^2 K)=O(K)$, including the intermediate results retained to reverse the computation. Since the number of labels is fixed, all histograms require $O(4mK)$ auxiliary qubits.

\subsection{Parallel diagonal phase}
To implement the phase in Eq.~\eqref{eq:histphase}, expand the eigenvalue counts and the coefficients $A_{ij}$ in binary. We do not compute the full value $E_A(\bm h)$ into an energy register. Instead, its binary expansion factors $e^{-iE_A(\bm h)}$ into controlled phase gates whose fixed angles contain the local eigenvalues and binary place values. In two's-complement notation each coefficient is a linear combination of its bits with fixed signed weights; the sign bit therefore enters in the same way as the other bits.

More explicitly, put $L=\lceil\log_2(K+1)\rceil$ and write $A_{ij}=\sum_{c=0}^{b_A-1}w_c a_{ij,c}$ and $h_{i,r}=\sum_{\alpha=0}^{L-1}2^\alpha h_{i,r,\alpha}$, where $w_c$ includes the negative two's-complement sign weight. Then
\begin{align}
E_A(\bm h)
=\sum_{i,j,r,t,c,\alpha,\beta}
\frac{w_c2^{\alpha+\beta}\lambda_r\lambda_t}{2K}
a_{ij,c}h_{i,r,\alpha}h_{j,t,\beta}.
\label{eq:binaryphase}
\end{align}
Each summand gives a diagonal gate on at most three bits, with an angle fixed by the circuit indices and the bits acting as quantum controls. Thus, there are at most $M\coloneqq(4m)^2(s+1)^2b_A L^2$ phase terms. Repeated bit labels only lower the arity, and each gate has a constant-size decomposition into one-qubit gates and CNOTs~\cite{Barenco}. For hardwired coefficients, their bit expansion can be absorbed into the fixed angles, leaving $O((4m)^2\log^2K)$ terms.

Applying the diagonal-gate parallelization of Ref.~\cite[Proposition 5]{MNcodes}, CNOT trees supply disjoint controls for all phase terms. At most three fresh control wires per term suffice, so the total fanout width is $O(M)$ and its depth is $O(\log(M+1))$. The phase gates then act in parallel. Since they leave the control bits unchanged, reversing the fanout and histogram circuits returns all temporary registers to zero while preserving the phase. Only bounded-arity CNOT gates are used.

\begin{proof}[Proof of Proposition~\ref{prop:shearresource}]
The local basis changes cost $O(1)$ depth. Eigenvalue counting costs $O(\log K)$ depth and $O(4mK)$ auxiliary qubits. Fanout to the polynomially many phase terms and its reversal cost $O(\log(4mK)+\log(b_A+1))$ depth. The phase gates are disjoint after copying. The number of auxiliary qubits includes the counting registers and a constant number of auxiliary qubits for each expanded term. Every auxiliary computation is reversed after the diagonal phase, which proves Eq.~\eqref{eq:shearresource}.
\end{proof}

\subsection{Count of rotations requiring synthesis}
We now count the one-qubit rotations to which the finite-gate approximation in Section~\ref{sec:resources} is applied. Let $g_{\mathcal W}(s)$ be the number of such rotations in a decomposition of one local basis change. This number is constant because $s$ is fixed. Each shear applies a basis change and its inverse to all $4mK$ submodes. Across the six shears, these operations therefore contribute at most
\begin{align}
6\times2\times(4mK)g_{\mathcal W}(s)=O(4mK)
\end{align}
rotation occurrences. The factor of two includes the return to the occupation basis.

For the diagonal phase, Eq.~\eqref{eq:binaryphase} contains at most $(4m)^2(s+1)^2b_A L^2$ constant-arity phase terms per shear, where $L=\lceil\log_2(K+1)\rceil$. Each such term has a constant-size decomposition containing $O(1)$ one-qubit rotations. Multiplication by the six shears changes only the constant factor. Hence, the total number of rotations requiring synthesis is
\begin{align}
G_{\rm rot}=O\!\left(4mK+(4m)^2b_A\log^2K\right).
\label{eq:rotationcountappendix}
\end{align}
The forward and inverse histogram circuits, the fanout trees and their reversals, reversible arithmetic, and input splitting use exact Clifford+$T$ primitives. They affect the gate and qubit counts but add no rotation occurrences to $G_{\rm rot}$.

\subsection{Logspace generation}
We prove the uniformity claim in Section~\ref{sec:model} by generating the gate list from the size parameters and a fixed inverse-polynomial accuracy. The cutoff $s$ is fixed for the family. The generator uses $O(\log m)$ workspace in each of the following tasks.

\begin{enumerate}
\item \emph{Compute the fixed local gates.} This task produces the input-independent local basis changes and rotation angles used in each truncated shear. For the diagonal phase, expanding Eq.~\eqref{eq:histphase} in binary gives constant-arity gates controlled on coefficient and count bits. Their angles depend on bit positions and the eigenvalues of $\hat{\bar q}$. Let $p=\Theta(\log(G_{\rm rot}/\epsilon))$ be the working precision required by the per-rotation error allocation in Section~\ref{sec:resources}. For fixed inverse-polynomial accuracy, Eq.~\eqref{eq:rotationcountappendix} gives $p=O(\log m)$. The working precision includes $O(\log K)$ extra bits to account for the $O(K)$ factors in Eq.~\eqref{eq:binaryphase}, without changing this scaling. Since $s$ is fixed, the eigenvalues of $\hat{\bar q}$ and a choice of eigenvectors are fixed algebraic numbers. Their $p$-bit approximations, and the elementary rotation matrices obtained from them, can be computed in $\poly(p)$ time and $O(p)$ workspace using root isolation for polynomials of fixed degree and arithmetic. Standard fixed-precision algorithms compute the required trigonometric values within the same time and workspace bounds~\cite{BrentZimmermann}. Repeating these computations for the polynomially many gate occurrences therefore takes $\poly(m)$ time and $O(\log m)$ workspace.
\item \emph{Generate the Clifford+$T$ words.} This task replaces each one-qubit rotation arising from the local basis changes and constant-arity phase gates by a word over the fixed gate set. At the same precision $p$, the synthesis theorem of Ref.~\cite{KMM} guarantees a word of length $O(p)$ on a constant number of qubits. Enumerate such words and store each matrix of constant dimension exactly. Each entry is an integer linear combination of $1,\sqrt2,i,i\sqrt2$ divided by a power of two. Multiplying by one elementary gate increases the denominator exponent and the coefficient bit lengths by at most a constant. A word of length $O(p)$ thus needs only $O(p)$ bits for its matrix. For a candidate word $W$ and target elementary rotation $R$, compare the restricted maps $W(\,\cdot\,\otimes\ket0_A)$ and $R(\,\cdot\,)\otimes\ket0_A$ in operator norm. The data--ancilla dimension is constant, so this is a constant-dimensional matrix test using $O(p)$ workspace. A threshold stricter than the required error provides a safety margin for the finite-precision comparison and ensures that a suitable word is accepted. There are $2^{O(p)}=\poly(m)$ candidates, and the word, counters, and matrix entries occupy $O(p)$ bits. Since the norm includes all output rows, it also counts any error in the synthesis ancillas, as required by the error bound in Section~\ref{sec:resources}.
\item \emph{Generate the wiring.} This task outputs the wire indices for the histogram, fanout, input-preparation, and readout circuits. The histogram-counting circuit above and the fanout circuit for the diagonal phase have regular tree structures. At each carry-save round of the histogram circuit, the number and padded bit length of the partial sums are determined solely by the round number and $K$. Gate and wire indices can be enumerated with counters of logarithmic size; forwarded wires can be traced backward by recomputation. For the common layout with $n$ supplied in binary, choose $K$ with $n=m$ in Eq.~\eqref{eq:Kchoice}; since $n\le m$, this has the same asymptotic scaling as the choice in the main text. Use a fixed integer upper bound on $C_s$ and a dyadic accuracy within a constant factor of the target. Testing successive powers of two after raising the inequality to the power $s$ uses only integer arithmetic on $O(\log m)$ bits. The comparisons $j\le n$ for input preparation, together with their fanout and reversal, have regular wiring generated from $m$ and the bit positions of $n$. The readout sums and validity test use the same arithmetic constructions.
\end{enumerate}
These procedures print the polynomial-size circuit using logarithmic workspace. The numerical values of the interferometer entries are supplied to its input wires.

\bibliographystyle{apsrev4-2}
\bibliography{reference}

@article{AA,
 author={Aaronson, Scott and Arkhipov, Alex}, title={The Computational Complexity of Linear Optics}, journal={Theory of Computing}, volume={9}, pages={143--252}, year={2013}, doi={10.4086/toc.2013.v009a004}, eprint={1011.3245}, archivePrefix={arXiv}}

@misc{MN,
 author={Moore, Cristopher and Nilsson, Martin}, title={Some Notes on Parallel Quantum Computation}, year={1998}, eprint={quant-ph/9804034}, archivePrefix={arXiv}}

@misc{Gossett,
 author={Gossett, Phil}, title={Quantum Carry-Save Arithmetic}, year={1998}, eprint={quant-ph/9808061}, archivePrefix={arXiv}}

@book{AroraBarak, author={Arora, Sanjeev and Barak, Boaz}, title={Computational Complexity: A Modern Approach}, publisher={Cambridge University Press}, year={2009}, doi={10.1017/CBO9780511804090}}

@misc{MNcodes,
 author={Moore, Cristopher and Nilsson, Martin}, title={Parallel Quantum Computation and Quantum Codes}, year={1998}, eprint={quant-ph/9808027}, archivePrefix={arXiv}}

@article{KMM,
 author={Kliuchnikov, Vadym and Maslov, Dmitri and Mosca, Michele}, title={Asymptotically Optimal Approximation of Single-Qubit Unitaries by {Clifford} and {T} Circuits Using a Constant Number of Ancillary Qubits}, journal={Physical Review Letters}, volume={110}, pages={190502}, year={2013}, doi={10.1103/PhysRevLett.110.190502}, eprint={1212.0822}, archivePrefix={arXiv}}

@article{Tong,
 author={Tong, Yu and Albert, Victor V. and McClean, Jarrod R. and Preskill, John and Su, Yuan}, title={Provably accurate simulation of gauge theories and bosonic systems}, journal={Quantum}, volume={6}, pages={816}, year={2022}, eprint={2110.06942}, archivePrefix={arXiv}}

@article{OhReview, author={Oh, Changhun}, title={Recent Theoretical and Experimental Progress on Boson Sampling}, journal={Current Optics and Photonics}, volume={9}, number={1}, pages={1--18}, year={2025}, doi={10.3807/COPP.2025.9.1.1}}

@article{OhGraph, author={Oh, Changhun and Lim, Youngrong and Fefferman, Bill and Jiang, Liang}, title={Classical simulation of boson sampling based on graph structure}, journal={Physical Review Letters}, volume={128}, pages={190501}, year={2022}, doi={10.1103/PhysRevLett.128.190501}, eprint={2110.01564}, archivePrefix={arXiv}}

@misc{OhNoisy, author={Oh, Changhun and Jiang, Liang and Fefferman, Bill}, title={On classical simulation algorithms for noisy {Boson Sampling}}, year={2023}, eprint={2301.11532}, archivePrefix={arXiv}}

@article{OhGBS, author={Oh, Changhun and Liu, Minzhao and Alexeev, Yuri and Fefferman, Bill and Jiang, Liang}, title={Classical algorithm for simulating experimental {Gaussian} boson sampling}, journal={Nature Physics}, volume={20}, pages={1461--1468}, year={2024}, doi={10.1038/s41567-024-02535-8}, eprint={2306.03709}, archivePrefix={arXiv}}

@article{KLM, author={Knill, Emanuel and Laflamme, Raymond and Milburn, G. J.}, title={A scheme for efficient quantum computation with linear optics}, journal={Nature}, volume={409}, pages={46--52}, year={2001}, doi={10.1038/35051009}, eprint={quant-ph/0006088}, archivePrefix={arXiv}}

@article{Hamilton, author={Hamilton, Craig S. and Kruse, Regina and Sansoni, Linda and Barkhofen, Sonja and Silberhorn, Christine and Jex, Igor}, title={Gaussian Boson Sampling}, journal={Physical Review Letters}, volume={119}, pages={170501}, year={2017}, doi={10.1103/PhysRevLett.119.170501}, eprint={1612.01199}, archivePrefix={arXiv}}

@article{Zhong, author={Zhong, Han-Sen and Wang, Hui and Deng, Yu-Hao and Chen, Ming-Cheng and Peng, Li-Chao and Luo, Yi-Han and Qin, Jian and Wu, Dian and Ding, Xing and Hu, Yi and Hu, Peng and Yang, Xiao-Yan and Zhang, Wei-Jun and Li, Hao and Li, Yuxuan and Jiang, Xiao and Gan, Lin and Yang, Guangwen and You, Lixing and Wang, Zhen and Li, Li and Liu, Nai-Le and Lu, Chao-Yang and Pan, Jian-Wei}, title={Quantum computational advantage using photons}, journal={Science}, volume={370}, pages={1460--1463}, doi={10.1126/science.abe8770}, year={2020}, eprint={2012.01625}, archivePrefix={arXiv}}

@misc{CC, author={Clifford, Peter and Clifford, Rapha\"{e}l}, title={The Classical Complexity of Boson Sampling}, year={2017}, eprint={1706.01260}, archivePrefix={arXiv}}

@article{CCfast, author={Clifford, Peter and Clifford, Rapha\"{e}l}, title={Faster classical boson sampling}, journal={Physica Scripta}, volume={99}, pages={065121}, year={2024}, doi={10.1088/1402-4896/ad4688}, eprint={2005.04214}, archivePrefix={arXiv}}

@article{Neville, author={Neville, Alex and Sparrow, Chris and Clifford, Rapha\"{e}l and Johnston, Eric and Birchall, Patrick M. and Montanaro, Ashley and Laing, Anthony}, title={Classical boson sampling algorithms with superior performance to near-term experiments}, journal={Nature Physics}, volume={13}, pages={1153--1157}, year={2017}, doi={10.1038/nphys4270}, eprint={1705.00686}, archivePrefix={arXiv}}

@article{HS, author={H{\o}yer, Peter and \v{S}palek, Robert}, title={Quantum fan-out is powerful}, journal={Theory of Computing}, volume={1}, pages={81--103}, year={2005}, doi={10.4086/toc.2005.v001a005}, eprint={quant-ph/0208043}, archivePrefix={arXiv}}

@article{Barenco, author={Barenco, Adriano and Bennett, Charles H. and Cleve, Richard and DiVincenzo, David P. and Margolus, Norman and Shor, Peter and Sleator, Tycho and Smolin, John A. and Weinfurter, Harald}, title={Elementary gates for quantum computation}, journal={Physical Review A}, volume={52}, pages={3457}, year={1995}, doi={10.1103/PhysRevA.52.3457}, eprint={quant-ph/9503016}, archivePrefix={arXiv}}

@article{Ross, author={Ross, Neil J. and Selinger, Peter}, title={Optimal ancilla-free {Clifford+T} approximation of z-rotations}, journal={Quantum Information and Computation}, volume={16}, pages={901--953}, year={2016}, eprint={1403.2975}, archivePrefix={arXiv}}

@article{Sawaya, author={Sawaya, Nicolas P. D. and Menke, Tim and Kyaw, Thi Ha and Johri, Sonika and Aspuru-Guzik, Al\'{a}n and Guerreschi, Gian Giacomo}, title={Resource-efficient digital quantum simulation of {$d$}-level systems for photonic, vibrational, and spin-{$s$} {Hamiltonians}}, journal={npj Quantum Information}, volume={6}, pages={49}, year={2020}, doi={10.1038/s41534-020-0278-0}, eprint={1909.12847}, archivePrefix={arXiv}}

@misc{DawsonNielsen, author={Dawson, Christopher M. and Nielsen, Michael A.}, title={The {Solovay--Kitaev} algorithm}, year={2005}, eprint={quant-ph/0505030}, archivePrefix={arXiv}}

@article{GrierGBS, author={Grier, Daniel and Brod, Daniel J. and Arrazola, Juan Miguel and Alonso, Marcos Benicio de Andrade and Quesada, Nicol\'{a}s}, title={The Complexity of Bipartite {Gaussian} Boson Sampling}, journal={Quantum}, volume={6}, pages={863}, year={2022}, doi={10.22331/q-2022-11-28-863}}

@article{HangleiterEisert, author={Hangleiter, Dominik and Eisert, Jens}, title={Computational advantage of quantum random sampling}, journal={Reviews of Modern Physics}, volume={95}, pages={035001}, year={2023}, doi={10.1103/RevModPhys.95.035001}}

@article{Zhong2021, author={Zhong, Han-Sen and Deng, Yu-Hao and Qin, Jian and Wang, Hui and Chen, Ming-Cheng and others}, title={Phase-Programmable {Gaussian} Boson Sampling Using Stimulated Squeezed Light}, journal={Physical Review Letters}, volume={127}, pages={180502}, year={2021}, doi={10.1103/PhysRevLett.127.180502}}

@article{Deng2023, author={Deng, Yu-Hao and Gu, Yi-Chao and Liu, Hua-Liang and Gong, Si-Qiu and Su, Hao and others}, title={{Gaussian} Boson Sampling with Pseudo-Photon-Number-Resolving Detectors and Quantum Computational Advantage}, journal={Physical Review Letters}, volume={131}, pages={150601}, year={2023}, doi={10.1103/PhysRevLett.131.150601}}

@article{Liu2026, author={Liu, Hua-Liang and Su, Hao and Deng, Yu-Hao and Gong, Si-Qiu and Gu, Yi-Chao and others}, title={{Gaussian} boson sampling with 1,024 squeezed states in 8,176 modes}, journal={Nature}, volume={653}, pages={687--692}, year={2026}, doi={10.1038/s41586-026-10523-6}}

@article{Madsen2022, author={Madsen, Lars S. and Laudenbach, Fabian and Askarani, Mohsen Falamarzi and Rortais, Fabien and Vincent, Trevor and Bulmer, Jacob F. F. and Miatto, Filippo M. and Neuhaus, Leonhard and Helt, Lukas G. and Collins, Matthew J. and Lita, Adriana E. and Gerrits, Thomas and Nam, Sae Woo and Vaidya, Varun D. and Menotti, Matteo and Dhand, Ish and Vernon, Zachary and Quesada, Nicol\'{a}s and Lavoie, Jonathan}, title={Quantum computational advantage with a programmable photonic processor}, journal={Nature}, volume={606}, pages={75--81}, year={2022}, doi={10.1038/s41586-022-04725-x}}

@article{Young2024, author={Young, Aaron W. and Geller, Shawn and Eckner, William J. and Schine, Nathan and Glancy, Scott and Knill, Emanuel and Kaufman, Adam M.}, title={An atomic boson sampler}, journal={Nature}, volume={629}, pages={311--316}, year={2024}, doi={10.1038/s41586-024-07304-4}}

@article{Quesada2022, author={Quesada, Nicol\'{a}s and Chadwick, Rachel S. and Bell, Bryn A. and Arrazola, Juan Miguel and Vincent, Trevor and Qi, Haoyu and Garc\'{i}a-Patr\'{o}n, Ra\'{u}l}, title={Quadratic Speed-Up for Simulating {Gaussian} Boson Sampling}, journal={PRX Quantum}, volume={3}, pages={010306}, year={2022}, doi={10.1103/PRXQuantum.3.010306}}

@article{OhConstant, author={Oh, Changhun}, title={Classical simulability of constant-depth linear-optical circuits with noise}, journal={npj Quantum Information}, volume={11}, pages={126}, year={2025}, doi={10.1038/s41534-025-01041-w}}

@article{Moylett, author={Moylett, Alexandra E. and Turner, Peter S.}, title={Quantum simulation of partially distinguishable boson sampling}, journal={Physical Review A}, volume={97}, pages={062329}, year={2018}, doi={10.1103/PhysRevA.97.062329}}

@article{Leone, author={Leone, Hudson and Turner, Peter S. and Devitt, Simon}, title={Quantum circuits for simulating linear interferometers}, journal={Physical Review Research}, volume={8}, pages={013190}, year={2026}, doi={10.1103/z5l3-z2dr}}

@misc{Cuccaro,
 author={Cuccaro, Steven A. and Draper, Thomas G. and Kutin, Samuel A. and Moulton, David Petrie},
 title={A new quantum ripple-carry addition circuit},
 year={2004}, eprint={quant-ph/0410184}, archivePrefix={arXiv}}

@article{Bennett1973,
 author={Bennett, Charles H.},
 title={Logical Reversibility of Computation},
 journal={IBM Journal of Research and Development},
 volume={17}, number={6}, pages={525--532}, year={1973},
 doi={10.1147/rd.176.0525}}

@article{NishimuraOzawa,
  author={Nishimura, Harumichi and Ozawa, Masanao},
  title={Computational complexity of uniform quantum circuit families and quantum {Turing} machines},
  journal={Theoretical Computer Science},
  volume={276},
  pages={147--181},
  year={2002},
  eprint={quant-ph/9906095},
  archivePrefix={arXiv}
}

@inproceedings{Paeth1986,
 author={Paeth, Alan W.},
 title={A Fast Algorithm for General Raster Rotation},
 booktitle={Proceedings of Graphics Interface '86},
 pages={77--81},
 year={1986}
}

@article{MoshinskyQuesne,
 author={Moshinsky, Marcos and Quesne, Christiane},
 title={Linear Canonical Transformations and Their Unitary Representations},
 journal={Journal of Mathematical Physics},
 volume={12}, number={8}, pages={1772--1780}, year={1971},
 doi={10.1063/1.1665805}
}

@article{Quesada2020, author={Quesada, Nicol\'{a}s and Arrazola, Juan Miguel}, title={Exact simulation of {Gaussian} boson sampling in polynomial space and exponential time}, journal={Physical Review Research}, volume={2}, pages={023005}, year={2020}, doi={10.1103/PhysRevResearch.2.023005}}

@article{Bulmer2022, author={Bulmer, Jacob F. F. and Bell, Bryn A. and Chadwick, Rachel S. and Jones, Alex E. and Moise, Diana and Rigazzi, Alessandro and Thorbecke, Jan and Haus, Utz-Uwe and Van Vaerenbergh, Thomas and Patel, Raj B. and Walmsley, Ian A. and Laing, Anthony}, title={The boundary for quantum advantage in {Gaussian} boson sampling}, journal={Science Advances}, volume={8}, pages={eabl9236}, year={2022}, doi={10.1126/sciadv.abl9236}}

@article{OhMPO2021, author={Oh, Changhun and Noh, Kyungjoo and Fefferman, Bill and Jiang, Liang}, title={Classical simulation of lossy boson sampling using matrix product operators}, journal={Physical Review A}, volume={104}, pages={022407}, year={2021}, doi={10.1103/PhysRevA.104.022407}}

@article{LiuMPO2023, author={Liu, Minzhao and Oh, Changhun and Liu, Junyu and Jiang, Liang and Alexeev, Yuri}, title={Simulating lossy {Gaussian} boson sampling with matrix-product operators}, journal={Physical Review A}, volume={108}, pages={052604}, year={2023}, doi={10.1103/PhysRevA.108.052604}}

@article{GoPartial2025, author={Go, Byeongseon and Oh, Changhun and Jeong, Hyunseok}, title={Quantum Computational Advantage of Noisy Boson Sampling with Partially Distinguishable Photons}, journal={PRX Quantum}, volume={6}, pages={030362}, year={2025}, doi={10.1103/rflv-gc66}}

@book{BrentZimmermann,
author={Brent, Richard P. and Zimmermann, Paul},
title={Modern Computer Arithmetic},
series={Cambridge Monographs on Applied and Computational Mathematics},
volume={18},
publisher={Cambridge University Press},
year={2010}
}

@inproceedings{CleveWatrous,
 author={Cleve, Richard and Watrous, John}, title={Fast parallel circuits for the quantum {Fourier} transform}, booktitle={Proceedings of the 41st Annual IEEE Symposium on Foundations of Computer Science}, pages={526--536}, year={2000}, eprint={quant-ph/0006004}, archivePrefix={arXiv}}
\end{document}